\newcommand{\rar}{\rightarrow}
\newcommand{\cl}{\mathcal}
\def\grad {{\nabla}}
\newcommand{\bs}[1]{\boldsymbol{#1}}
\definecolor{deepgreen}{cmyk}{1,0,1,0.5}
\newcommand{\R}{\mathbb{R}}
\newcommand{\al}{\alpha}
\newcommand{\be}{\beta}
\newcommand{\de}{\delta}
\newcommand{\la}{\lambda}
\newcommand{\p}{\partial}
\newcommand{\Rmnum}[1]{\expandafter\@slowromancap\romannumeral #1@}
\newcommand{\Del}[1]{}
\numberwithin{equation}{section}
\newtheorem{thm}{Theorem}[section]
\newtheorem{cor}[thm]{Corollary}
\newtheorem{lem}[thm]{Lemma}
\newtheorem{prop}[thm]{Proposition}
\theoremstyle{remark}
\newtheorem{defn}[thm]{Definition}
\definecolor{green}{rgb}{0,0.8,0} 
\newcommand{\sgn}{{\mathrm{sgn}}}
\newcommand{\tr}{\textrm{tr}}
\newcommand{\eps}{\epsilon}
\newcommand{\veps}{\varepsilon}
\newcommand{\bbE}{\mathbb E}
\newcommand{\bbR}{\mathbb R}
\newcommand{\tens}{\otimes}
\newcommand{\Div}{\mbox{Div}\,}
\newcommand{\msm}{\bs{\mathsf{m}}}
\newcommand{\msa}{\bs{\mathsf{a}}}
\newcommand{\msb}{\bs{\mathsf{b}}}
\newcommand{\msg}{\bs{\mathsf{g}}}
\newcommand{\msu}{\bs{\mathsf{u}}}
\newcommand{\msM}{\bs{\mathsf{M}}}
\newcommand{\msF}{\bs{\mathsf{F}}}
\newcommand{\msG}{\bs{\mathsf{G}}}
\newcommand{\msB}{\bs{\mathsf{B}}}
\newcommand{\msK}{\bs{\mathsf{K}}}
\newcommand{\msL}{\bs{\mathsf{L}}}
\newcommand{\msE}{\bs{\mathsf{E}}}
\newcommand{\msC}{\bs{\mathsf{C}}}
\newcommand{\msT}{\bs{\mathsf{T}}}
\newcommand{\mst}{\bs{\mathsf{t}}}
\newcommand{\msH}{\bs{\mathsf{H}}}
\newcommand{\msP}{\bs{\mathsf{P}}}
\newcommand{\msR}{\bs{\mathsf{R}}}
\newcommand{\msp}{\bs{\mathsf{p}}}
\newcommand{\msU}{\bs{\mathsf{U}}}
\newcommand{\sK}{\mathsf{K}}
\newcommand{\sL}{\mathsf{L}}
\newcommand{\sanl}{\mathsf{l}}
\newcommand{\sE}{\mathsf{E}}
\newcommand{\sB}{\mathsf{B}}
\newcommand{\sanb}{\mathsf{b}}
\newcommand{\sG}{\mathsf{G}}
\newcommand{\sg}{\mathsf{g}}
\newcommand{\sa}{\mathsf{a}}
\newcommand{\sh}{\mathsf{h}}
\newcommand{\sk}{\mathsf{k}}
\newcommand{\st}{\mathsf{t}}
\newcommand{\su}{\mathsf{u}}
\newcommand{\sT}{\mathsf{T}}
\newcommand{\sU}{\mathsf{U}}
\begin{document}

\title[Strain-gradient elastic surfaces]{Elastic solids with strain-gradient elastic boundary surfaces}

\author{C. Rodriguez}

\begin{abstract}
Recent works have shown that in contrast to classical linear elastic fracture mechanics, endowing crack fronts in a brittle Green-elastic solid with Steigmann-Ogden surface elasticity yields a model that predicts bounded stresses and strains at the crack tips for plane-strain problems. However, singularities persist for anti-plane shear (mode-III fracture) under far-field loading, even when Steigmann-Ogden surface elasticity is incorporated. 

This work is motivated by obtaining a model of brittle fracture capable of predicting bounded stresses and strains for all modes of loading. We formulate an exact general theory of a three-dimensional solid containing a boundary surface with strain-gradient surface elasticity. For planar reference surfaces parameterized by flat coordinates, the form of surface elasticity reduces to that introduced by Hilgers and Pipkin, and when the surface energy is independent of the surface covariant derivative of the stretching, the theory reduces to that of Steigmann and Ogden.  We discuss material symmetry using Murdoch and Cohen's extension of Noll's theory. We present a model small-strain surface energy that incorporates resistance to geodesic distortion, satisfies strong ellipticity, and requires the same material constants found in the Steigmann-Ogden theory. 

Finally, we derive and apply the linearized theory to mode-III fracture in an infinite plate under far-field loading. We prove that there always exists a unique classical solution to the governing integro-differential equation, and in contrast to using Steigmann-Ogden surface elasticity, our model is consistent with the linearization assumption in predicting finite stresses and strains at the crack tips.  
\end{abstract}

\maketitle

\section{Introduction}

\subsection{Surface stressed solid bodies}
The study of surface tension for solids was initiated by Gibbs in 1857, and it is now widely accepted that surface tension and more general surfaces stresses must be accounted for when modeling mechanical structures at small length scales. In particular, interfaces between a three-dimensional solid and its environment can form due to various mechanisms including coating or atomic rearrangement during fracture.\footnote{By an interface, we mean a thin region separating either two distinct materials or two distinct phases of a material.} A way to mathematically model such an interface is by endowing part of the three-dimensional body's two-dimensional boundary with it's own thermodynamic properties that are distinct from the substrate (such as energy). 

Motivated by modeling the observed compressive surface stresses in certain cleaved crystals, Gurtin and Murdoch \cite{GurtinMurd75, GurtinMurd78} developed a rigorous general theory of material surface stresses that accounts for the surface's resistance to stretching (but not flexure). Their celebrated theory has been used to model a wide arrange of phenomena over the past 45 years, especially recently due to advances in nanoscience and nanotechnology. In the special case of a Green-elastic, three-dimensional solid with reference configuration $\cl B$ and material surface $\cl S \subseteq \p \cl B$, the field equations for the current configuration $\bs \chi: \cl B \rar \bs \chi(\cl B)$ are the Euler-Lagrange equations associated to the total strain energy
\begin{align}
	\bs \Phi[\bs \chi] = \int_{\cl B} W(\bs C) \, dA + \int_{\cl S} U(\msC) \, dA,
\end{align} 
where we omit the possible explicit dependence of the functions $W$ and $U$ on points in $\cl B$ and $\cl S$ (see Figure \ref{f:1}).
Here $\bs C$ is the left Cauchy-Green stretch tensor and $\msC$ is the left Cauchy-Green surface stretch tensor, the pullbacks by $\bs \chi$ of the metric tensors on $\bs \chi(\cl B)$ and $\bs \chi(\cl S)$, respectively. In particular, the material surface stress tensor is derived from the surface energy density $U(\msC)$ in analogy with the substrate's Piola stress being derived from the substrate energy density $W(\bs C)$. 

\begin{figure}[b]
	\centering
	\includegraphics[scale=.5]{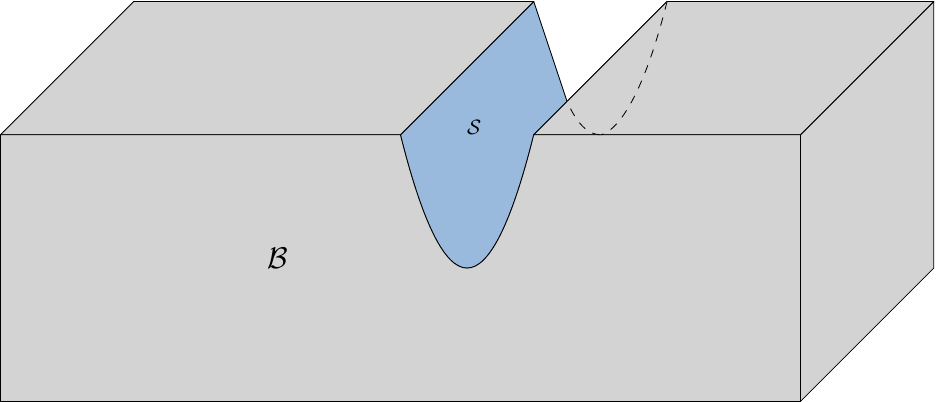}
	\caption{The reference configuration of a three-dimensional solid $\cl B$ with strain-gradient elastic surface $\cl S \subseteq \p \cl B$.}
	\label{f:1}
\end{figure}

However, Steigmann and Ogden showed in their seminal works \cite{SteigOgden97a, SteigOgden99} that { surface-substrate} equilibrium states under compressive surface stresses obtained from the Gurtin-Murdoch theory do not satisfy an associated Legendre-Hadamard condition, and thus, these equilibrium states cannot be local energy minimizers.\footnote{ The fact that a stand-alone membrane (with no substrate), a special type of Gurtin-Murdoch material surface, in equilbrium and under compressive surface stresses cannot by locally energy minimizing was proved in two independent works: by Pipkin \cite{Pipkin86} for initially plane, isotropic membranes, and by Steigmann \cite{Steigmann86} for general, possibly anistropic and initially curved membranes.} Steigmann and Ogden \cite{SteigOgden97a, SteigOgden99} rectified this inadequacy and incorporated the material surface's resistance to flexure by including normal curvature dependence in the surface energy: 
\begin{align}
	\bs \Phi[\bs \chi] = \int_{\cl B} W(\bs C) \, dA + \int_{\cl S} U(\msC, \bs \kappa) \, dA,
\end{align}
where $\bs \kappa$ is the pullback of the second fundamental form on $\bs \chi(\cl S)$. Since $\bs \kappa$ depends on the second derivatives of $\bs \chi$, the surface energy is of \emph{strain-gradient} type. In recent years, the Steigmann-Ogden theory has attracted considerable interest from various perspectives including the study of contact problems \cite{Zem18, MI2018, Zem19, LiMi19a, LiMi19b, LiMi21, ZemWhite22} and inclusion problems \cite{HAN2018, ZEMLYANOVA2018, ZEMMOG18B, DAI2018, MogKZem19, WANG2020311}. The theory has also been used to model fracture in brittle materials \cite{Zem17StraightCrack, Zem21Penny, Zem20Multiple}, the main phenomenon motivating this work. 

One of the most successful and practical theories for modeling fracture in brittle materials is classical linear elastic fracture mechanics. The governing linear partial differential equations are derived from finite elasticity under the assumption of \emph{infinitesimal} strains (linearized elasticity), but the theory predicts \emph{unbounded} singular strains at the crack tips, a striking inconsistency and physically impossible prediction. There have been a vast number of suggestions for supplementing classical linear elastic fracture mechanics to correct this defect in the theory (see, e.g., \cite{Broberg}). 

{ One modification of linear elastic fracture mechanics that removes certain singularities is including higher spatial gradients of the displacement in the stored energy. In the foundational works of Muki and Sternberg \cite{SternbergMuki65, SternbergMuki67} and Bogy and Sternberg \cite{BogySternberg67, BogySternberg68}, the authors found that including couple-stresses removed the singularities in the rotation gradient predicted by classical linear elastostatics in a variety of problems. However, crack tip stress and strain singularities of the same order but with different intensity persisted for mode-I fracture \cite{SternbergMuki67}. Including strain-gradients rather than the rotation gradient in the body's stored energy removes crack tip strain singularities, but stress singularities still persist in general (see, e.g., \cite{AltanAif1992, Askes2011}).}

A more recent approach aimed at eliminating \textit{both stress and strain} crack tip singularities is to modify the boundary conditions of classical linear elastic fracture mechanics by endowing the crack fronts with material surface stresses. { This methodology does not introduce higher gradients in the substrate's stored energy, but it does include higher surface gradients via surface elasticity.} Beginning with \cite{OhWaltonSlatt} and developed further by Sendova and Walton \cite{SendovaWalton}, one line of thought has been to prescribe the crack front's surface stresses as a normal curvature dependent surface tension. Although able to remove the singularities completely in a diverse range of settings (see \cite{ZemWalton12, Zem12Curvature, Zem14, Walton2014plane, Ferguson2015numerical}), it is unclear if the surface stresses from the Sendova-Walton theory can be derived from a surface energy density, a reasonable definition of ``elastic-like" behavior.\footnote{Equivalently, it is unclear if the governing field equations from the Sendova-Walton model can be derived from a Lagrangian energy functional.} 

{Deriving material stresses for the crack fronts from a Gurtin-Murdoch surface energy \emph{does not} remove the crack tip singularities for far-field loading \cite{WaltonNote12, IMRUSch13}. {The refined analysis given in \cite{Gorbushinetal20} shows that different, nonuniform, mode-III loading along crack fronts possessing Gurtin-Murdoch surface elasticity with inhomogeneous surface shear constants may either lead to a logarithmic singularities, classical square root singularities, or bounded stresses and strains at the crack tips.} However, endowing the crack fronts with Steigmann-Ogden surface energy \emph{does} remove the singularities for certain plane-strain problems \cite{Zem17StraightCrack, Zem20Multiple} and axisymmetric penny shaped cracks \cite{Zem21Penny}. Unfortunately, for anti-plane shear (mode-III loading), Steigmann-Ogden surface energy reduces to Gurtin-Murdoch surface energy, and \emph{crack tip singularities persist} for far-field loading. Our main motivation is to develop a model of brittle fracture capable of predicting bounded crack tip stresses and strains for all modes of loading. Inspired by \cite{Ferguson2015numerical}, this work proposes a surface-substrate equilibrium theory in which the surface energy depends on stretching, normal curvature, and the surface covariant derivative of stretching.}\footnote{The fact that the Steigmann-Ogden theory reduces to the Gurtin-Murdoch theory for mode-III loading follows from the vanishing of the linearized normal curvature for anti-plane shear, i.e., for displacement fields tangent to the material surface.}

\subsection{Main results and outline}
This work proposes an augmentation of the Steigmann-Ogden surface-substrate theory that is of strain-gradient type and includes the surface covariant derivative of stretching in the surface energy: 
\begin{align}
	\bs \Phi[\bs \chi] = \int_{\cl B} W(\bs C) \, dA + \int_{\cl S} U(\msC, \bs \kappa, \nabla \msC) \, dA, \label{eq:introenergy}
\end{align}
where $\nabla$ is the Levi-Cevita connection on $\cl S$. For $\cl S$ contained in a plane and parameterized by flat coordinates, the form of the surface energy $U$ is equivalent to that introduced by Hilgers and Pipkin \cite{HilgPip92a} for strain-gradient elastic plates. { For general, possibly curved $\cl S$, the form of $U$ appearing in \eqref{eq:introenergy} is equivalent to that proposed by Steigmann \cite{Steigmann18Lattice} for stand-alone strain-gradient elastic surfaces with no substrate.} As we show in Section 2, the dependence of the surface energy on $\nabla \msC$ endows the material surface $\cl S$ with resistance to \emph{geodesic distortion}, i.e., convected geodesics failing to be geodesics on $\bs \chi(\cl S)$. 

{ Incorporating the resistance of an elastic surface to geodesic distortion is not solely a mathematical exercise. In addition to the modeling of fracture presented in Section 4 of this work, resistance to geodesic distortion has also proven to be an essential tool in the modeling of \emph{fibrous surfaces}.
	
A model for a three-dimensional solid with a boundary surface formed by a system of aligned thin elastic fibers (\emph{hyperbolic metasurfaces}) was introduced by Eremeyev \cite{Eremeyev19}, and the equilibrium theory serves as a special case of the general theory presented in this work. There, a single unit vector field $\bs T$ on the reference surface $\cl S \subseteq \cl B$ is interpreted as being tangent to a continuous distribution of aligned fibers. Assuming that the reference fibers are geodesics, the surface energy $U$ depends on $\msC$, the stretch of the fibers, and the absolute value of the geodesic curvature of the fibers.
	
The modeling of stand-alone, two-dimensional, fibrous \emph{networks} that are not attached to substrates has also been a source of intensive study. In this theory, two orthonormal vector fields $\bs L$ and $\bs M$ on $\cl S$ are interpreted as being tangent to continuous distributions of fibers forming a reference network. The current network corresponds to the integral curves of the convected vector fields. The surface energy $U$ yielding the equilibrium theory via Hamilton's principle is a function of kinematic quantities associated to the curves forming the current network (e.g., their stretches, normal curvatures, tangent curvatures, etc.). The foundational continuum theories by Rivlin \cite{Rivlin55}, Green and Adkins \cite{GreenAdkins1970Book}, Pipkin \cite{Pipkin81}, and Steigmann-Pipkin \cite{SteigmannPipkin91} focused on networks of perfectly flexible fibers with surface energies depending on the stretches and shear angle.  
	
Later, researchers incorporated second gradient effects in the theory of fibrous networks, resulting in surface energies of the form appearing in \eqref{eq:introenergy}. Wang and Pipkin \cite{WangPipkin86} proposed a theory of networks of inextensible fibers with flexural resistance, and more recently, Steigmann \cite{Steigmann18CrossedElastb} generalized their theory to include resistance to fiber twist. The full second surface covariant derivative of the deformation and resistance to geodesic bending of the fibers were first included in surface energies by Steigmann and dell’Isola in \cite{SteigmanndellIsola15}. There, the authors decomposed the second covariant derivative of the deformation of $\cl S$ in the $\{\bs L, \bs M\}$ basis with respect to the two covariant slots. Assuming $\bs L$ and $\bs M$ generate geodesics, certain terms can be shown to be proportional to the geodesic curvatures of the convected fibers (see, e.g., (52) in \cite{SteigmanndellIsola15}).  In \cite{Steigmann18Lattice}, Steigmann developed the fully general equilibrium theory for stand-alone elastic surfaces with surface energy depending on the surface deformation gradient and second covariant derivative of the deformation (with no substrate), including material symmetry and a virtual work principle.  Numerical solutions of the Steigmann-dell'Isola theory correctly predicted regions of uniform shear separated by thin regions of geodesic bending observed in bias tests of uniform pantographic lattices \cite{Giorgioetal15}. Further numerics based on the Steigmann-dell'Isola theory predicted novel “geodesic buckling” in plane shear deformations \cite{Giorgioetal16}, bulging effects in strain-gradient elastic cylinders that are in sharp contrast to the membrane theory \cite{Giorgioetal18}, and strain energy localization at the edges of Hypar nets \cite{Giorgioetal19}. 
	
As the above works indicate and as we demonstrate in Section 4, including strain-gradients and resistance to geodesic distortion in surface elasticity presents a powerful and geometric continuum approach to accurately modeling materials with small-scale structures.  
}    

In Section 2, the relevant kinematics of the boundary surface $\cl S$ convected by a deformation of the three-dimensional solid $\cl B$ are first summarized. In particular, we introduce a third order tensor $\msL$ that is obtained from certain transposes of $\nabla \msC$ and has components in terms of the difference of Christoffel symbols on $\cl S$ and $\bs \chi(\cl S)$. This tensor is used in a model surface energy proposed in Section 3. Physically, the tensor $\msL$ furnishes the rate of stretching of convected geodesics and locally characterizes the notion of geodesic distortion previously discussed (see Proposition \ref{p:geod}).  The general form of \eqref{eq:introenergy} that we consider is then introduced (see \eqref{eq:strainenergy}). We then discuss material symmetry for the surface energy density $U$ (see \ref{eq:symmetryset}) using Murdoch and Cohen's extension to surfaces of Noll's classical theory of material symmetry that was reformulated by Steigmann and Ogden \cite{SteigOgden99} and Steigmann \cite{Steigmann18Lattice}. In the final subsection of Section 2, we derive the field equations \eqref{eq:fieldequations} governing equilibrium states of the solid $\cl B$ with strain-gradient elastic surface $\cl S \subseteq \p \cl B$ from a Lagrangian energy functional \eqref{eq:action} including the boundary relations between the boundary tractions and the surface stresses.

In Section 3, we present a properly hemitropic, small-strain surface energy that requires the same number of material constants (with the same physical interpretations) as found in the Steigmann-Ogden theory (see \eqref{eq:surfaceen}). In contradistinction, however, the surface energy we propose satisfies the strong ellipticity condition (see \eqref{eq:strongellipt}, \eqref{eq:strong}). We then derive the linearized field equations governing infinitesimal displacements of $\cl B$ and $\cl S$, including the boundary relations connecting the linearized boundary tractions and the linearized surface stresses. { The resulting linear theory is equivalent to the $N = 2$ case of the linear surface-substrate model with $N$th order surface elasticity proposed by Eremeyev, Lebedev and Cloud \cite{Eremeyevetal21}. In particular, our work may be seen as giving the exact theory that yields the $N = 2$ model from \cite{Eremeyevetal21} for infinitesimal displacements.} 

Finally, in Section 4, we apply the linearized theory \eqref{eq:HPlinfieldequations} to the problem of a brittle infinite solid with a straight, non-interfacial crack  of finite length, under mode-III loading. Using the explicit Dirichlet-to-Neumann map, the problem is reduced to solving a fourth order integro-differential equation for the crack profile along the boundary crack front (see \eqref{eq:integrodiff}).  Analytically, it is the surface energy satisfying the strong ellipticity condition that implies that this integro-differential equation is fourth order in the surface derivative of the displacement.\footnote{Physically, it is the model incorporating resistance to geodesic distortion that implies that this integro-differential equation is fourth order in the surface derivative of the displacement.} The dimensionless form of the equation implies that the behavior of the displacement depends on the size of the crack, and for macro cracks, we expect the displacement to be well-approximated by the solution given by classical linear elastic fracture mechanics except in small regions near the crack tips (boundary layers). Finally, using the Lax-Milgram theorem and regularity afforded by the presence of the fourth order derivative, we prove that there always exist a unique classical solution to the governing integro-differential equation, and in contrast to using the Steigmann-Ogden theory, our model predicts bounded stresses and strains up to the crack tips (see Theorem \ref{t:bounded}). 

\subsection*{Acknowledgments}{The completion of this work was supported by NSF Grant DMS-2307562. The author thanks Jay R. Walton for several fruitful discussions related to fracture, surface elasticity and interfacial physics. The author would also like to thank Jeremy Marzuola for his help generating numerical solutions to the mode-III fracture problem discussed in Section 4. Finally, the author thanks the anonymous referees whose careful reading greatly improved the clarity of exposition and scholarship of this work.}

\section{Kinematics and field equations}

In this section formulate a general mathematical model for a Green-elastic three-dimensional solid containing a boundary surface with strain-gradient surface elasticity. For planar reference surfaces, the form of surface elasticity reduces to that introduced by Hilgers and Pipkin \cite{HilgPip92a}, and when the surface energy is independent of the surface covariant derivative of the stretching, the theory reduces to that of Steigmann and Ogden \cite{SteigOgden97a, SteigOgden99}. 

\subsection{Kinematics}
Let $\bbE^3$ be three-dimensional Euclidean space. We identify the translation space of $\mathbb E^3$ with $\bbR^3$ via a fixed orthonormal basis $\{\bs e_i \}_{i = 1}^3$. Upon choosing an origin $\bs o \in \bbE^3$, we identify subsets of $\bbE^3$ with subsets of $\bbR^3$ via their position vectors: $\bbE^3 \ni \bs p \mapsto \bs p - \bs o \in \bbR^3$.   We define the following operations for elementary tensor products of vectors in $\bbR^3$, 
\begin{gather}
	(\bs a_1 \tens \bs a_2)\bs b = (\bs a_2 \cdot \bs b) \bs a_1, \quad 
	(\bs a_1 \tens \bs a_2 \tens \bs a_3) \bs b = (\bs a_3 \cdot \bs b) \bs a_1 \tens \bs a_2, \\
	(\bs a_1 \tens \bs a_2 \tens \bs a_3)(\bs b_1 \tens \bs b_2) = (\bs a_3 \cdot \bs b_1) \bs a_1 \tens \bs a_2 \tens \bs b_2, \\
	(\bs b_1 \tens \bs b_2)(\bs a_1 \tens \bs a_2 \tens \bs a_3) = (\bs b_2 \cdot \bs a_1) \bs b_1 \tens \bs a_2 \tens \bs a_3, \\
	(\bs a_1 \tens \bs a_2 \tens \bs a_3)[\bs b_1 \tens \bs b_2] = 
	(\bs a_2 \cdot \bs b_1)(\bs a_3 \cdot \bs b_2) \bs a_1, \\
	(\bs a_1 \tens \bs a_2)[\bs b_1 \tens \bs b_2] = (\bs a_1 \cdot \bs b_1)(\bs a_2 \cdot \bs b_2), \\
	(\bs a_1 \tens \bs a_2 \tens \bs a_3)[\bs b_1 \tens \bs b_2 \tens \bs b_3] = 
	(\bs a_1 \cdot \bs b_1)(\bs a_2 \cdot \bs b_2)(\bs a_3 \cdot \bs b_3), \\
	(\bs a_1 \tens \bs a_2)^T = \bs a_2 \tens \bs a_1, \quad
	(\bs a_1 \tens \bs a_2 \tens \bs a_3)^T = \bs a_1 \tens \bs a_3 \tens \bs a_2, \\
	(\bs a_1 \tens \bs a_2 \tens \bs a_3)^\sim = \bs a_2 \tens \bs a_1 \tens \bs a_3.  
\end{gather} 
These operations are extended to general second and third order tensors on $\bbR^3$ by linearity. 

Let $\cl B \subseteq \bbE^3$ be a domain with smooth boundary $\p \cl B$, the reference configuration of a three-dimensional body, and let $\cl S \subseteq \p \cl B$ be a closed surface with smooth boundary. Let $\bs \chi : \cl B \rar \bs \chi(\cl B) \subseteq \bbE^3$ be a smooth, invertible deformation of $\cl B$. We denote the current position of the reference particle $\bs X \in \cl B$ by 
\begin{align}
	\bs x = \bs \chi (\bs X) = (\chi^1(\bs X), \chi^2(\bs X), \chi^3(\bs X)). 
\end{align}
The deformation gradient $\bs F : \bbR^3 \rar \bbR^3$ is the second order tensor field
\begin{align}
	\bs F
	&= \frac{\p \chi^i}{\p X^a}(\bs X) \bs e_i \tens \bs e^a
\end{align}
where $\bs e^a := \bs e_a$, $a = 1, 2, 3$.
We denote the left Cauchy-Green stretch tensor by $\bs C =  \bs F^T \bs F$. 

Let $\bs Y = \hat{\bs Y}(\theta^1, \theta^2)$ be a local parameterization of the reference surface $\cl S \subseteq \p \cl B$. Then 
\begin{align}
	\bs y = \hat{\bs y}(\theta^1, \theta^2) = \bs \chi(\hat{\bs Y}(\theta^1, \theta^2))
\end{align}
is a local parameterization of the current surface $\bs \chi(\cl S) \subseteq \bs \chi(\p \cl B)$. The (local) tangent vector fields on the reference and current surfaces are then given by 
\begin{gather}
	\bs Y_{, \al} \in T_{\bs Y} \cl S, \quad 
	\bs y_{, \al} = \bs F \bs Y_{,\al} \in T_{\bs y} \bs \chi(\cl S), 
\end{gather} 
where $\cdot_{, \al} := \frac{\p}{\p \theta^\al}$. The dual tangent vector fields on the reference and current surfaces are denoted by $\bs Y^{,\beta}$ and $\bs y^{,\beta}$ respectively and satisfy   
\begin{align}
	\bs Y^{,\beta} \cdot \bs Y_{,\al} = \tensor{\delta}{^\beta_\al}, \quad \bs y^{,\beta} \cdot \bs y_{,\al} = \tensor{\delta}{^\beta_\al}.
\end{align} 
We remark that we may also write 
\begin{align}
	\bs y_{,\al} = \msF \bs Y_{,\al}
\end{align}
where $\msF = \tensor{F}{^k_{\beta}} \bs e_k \tens \bs Y^{,\beta} :=  y^k_{,\beta} \bs e_k \tens \bs Y^{,\beta} = \bs y_{,\beta} \tens \bs Y^{,\beta}$ is the \emph{surface deformation gradient}. The first fundamental forms for the reference and current surfaces are then given by 
\begin{gather}
	\msG = \sG_{\al \beta} \bs Y^{,\alpha} \tens \bs Y^{,\beta}, \quad \sG_{\al \beta} = \bs Y_{,\al} \cdot \bs Y_{,\beta}, \\
	\msg = \sg_{\al \beta}  \bs y^{,\alpha} \tens \bs y^{,\beta}, \quad \sg_{\al \beta} = \bs y_{,\alpha} \cdot \bs y_{,\beta},
\end{gather}
and we note that 
\begin{align}
	\bs Y^{,\alpha} = (\sG^{-1})^{\al \beta} \bs Y_{,\beta}, \quad \bs y^{,\alpha} = (\sg^{-1})^{\al \beta} \bs y_{,\beta}
\end{align}
The Christoffel symbols of the second kind for the reference and current surfaces are denoted by 
\begin{align}
\tensor{\Gamma}{^\al}{_{\beta \delta}} = \bs Y^{,\alpha} \cdot \bs Y_{,\beta \delta}, \quad \tensor{\gamma}{^\al}{_{\beta \delta}} = \bs y^{,\alpha} \cdot \bs y_{,\beta \delta} . 
\end{align}

The \emph{left Cauchy-Green surface stretch tensor} on $\cl S$ is the second order tensor  
\begin{align}
	\msC := \msF^T \msF =
	\sg_{\al \beta} \bs Y^{,\alpha} \tens \bs Y^{,\beta},
\end{align}
and the {\emph surface Green-St. Venant tensor} is the second order tensor $\msE = \frac{1}{2}(\msC - \msG)$. We assume that the reference and current surfaces are orientable with unit normals to the reference and current surfaces (locally) given by 
\begin{align}
	\bs N = |\bs Y_{,1} \times \bs Y_{,2}|^{-1} \bs Y_{,1} \times \bs Y_{,2}, \quad  
	\bs n = |\bs y_{,1} \times \bs y_{,2}|^{-1} \bs y_{,1} \times \bs y_{,2}
\end{align}
respectively. The second fundamental forms on the reference and current surfaces are given by 
\begin{gather}
	\msB = \sB_{\al \beta} \bs Y^{,\alpha} \tens \bs Y^{,\beta}, \quad \sB_{\al \beta} := \bs Y_{,\al\beta} \cdot \bs N, \\
	\msb = \sanb_{\al \beta} \bs y^{,\alpha} \tens \bs y^{,\beta}, \quad \sanb_{\al \beta} := \bs y_{,\al \beta} \cdot \bs n. 
\end{gather} 
The \emph{relative normal curvature tensor} on $\cl S$ is the second order tensor $\msK = \sK_{\al \beta} \bs Y^{,\al} \tens \bs Y^{,\beta}$ given by 
\begin{align}
	\msK = \msF^T \msb \msF - \msB = 
	[\sanb_{\al \beta} - \sB_{\al \beta}] \bs Y^{,\alpha} \tens \bs Y^{,\beta}. 
\end{align}
As discussed by Steigmann and Ogden \cite{SteigOgden99}, $\msE$ and $\msK$ furnish local differences in length and scaled extrinsic normal curvature between a given curve on $\cl S$ and the convected curve on $\bs \chi(\cl S)$, where $\dot{\mbox{}} := \frac{d}{ds}$. Indeed, if $\bs Z(s)$ is a a curve on $\cl S$ with unit tangent vector field $\msT$ and $\bs z(s) = \bs \chi(\bs Z(s))$ is the convected curve, then the tangent to the convected curve is given by $\mst = \dot{\bs z} = \bs F \msT = \msF \msT$. The \emph{stretch} $\nu$ of the convected curve is  
\begin{align}
	\nu = |\mst|^2 - 1 = \msF \msT \cdot \msF \msT - 1 
	= \msC \msT \cdot \msT - 1 = 2 \msE \msT \cdot \msT. 
\end{align} 
The extrinsic normal curvature of the convected curve is 
\begin{align}
	\kappa = \msb \mst \cdot \mst = |\mst|^{-2} \msb \msF \msT \cdot \msF \msT
	= (\nu+1)^{-1} (\msF^T \msb \msF) \msT, 
\end{align}
and thus, the difference in scaled extrinsic normal curvature of the convected curve and the original curve is given by 
\begin{align}
	|\mst|^2 \kappa - |\msT|^2 \msB \msT \cdot \msT = \msK \msT \cdot \msT. 
\end{align}

The Levi-Cevita connection on $\cl S$ is denoted by $\nabla$, so 
\begin{align}
	\nabla \msE = \frac{1}{2} \nabla \msC = \frac{1}{2} \nabla_{\delta} \sg_{\al \beta} \bs Y^{,\alpha} \tens \bs Y^{,\beta} \tens \bs Y^{,\delta},
\end{align}
where 
\begin{align}
	\nabla_{\delta} \sg_{\al \beta} := \sg_{\al \beta, \delta} - \tensor{\Gamma}{^\mu}{_{\alpha \delta}} \sg_{\mu \beta} - \tensor{\Gamma}{^\mu}{_{\beta \delta}} \sg_{\mu \al}. 
\end{align}
For later use, we define the following third order tensor on $\cl S$,
\begin{align}
	\msL = \nabla \msE + (\nabla \msE)^T - (\nabla \msE^\sim )^T,
\end{align}
with components 
\begin{align}
	\sL_{\al \beta \delta} = \frac{1}{2} (\nabla_\delta \sg_{\al \beta} + \nabla_\beta \sg_{\al \delta} - \nabla_\al \sg_{\beta \delta}) =  (\tensor{\gamma}{^\mu_\beta_\delta} - \tensor{\Gamma}{^\mu_\beta_\delta}) \sg_{\mu \al}. 
\end{align} 
Physically, the tensor $\msL$ (and thus $\nabla \msE$) furnishes the rate of stretching of convected geodesics, and more generally, it quantifies \emph{geodesic distortion}, i.e. how convected geodesics fail to be geodesics on $\bs \chi(\cl S)$.\footnote{ We comment that the tensor $\bs \sL$ in this work and the tensor $\bs{\mathsf{S}} := \tensor{S}{^\nu_\al_\beta} \bs Y_{,\nu} \tens \bs Y^{,\al} \tens \bs Y^{,\beta}$ that plays a prominent role in \cite{SteigmanndellIsola15, Giorgioetal15, Giorgioetal16, Giorgioetal18, Giorgioetal19, Steigmann18Lattice} are related via $\sL_{\al \beta \delta} = \tensor{S}{^\nu_\al_\beta} \sg_{\nu \delta}$.} More precisely, we have the following. 
\begin{prop}\label{p:geod}
	Let $\bs Z(s) : I \rar \cl S$ be a geodesic on $\cl S$ with unit tangent vector field $\msT = \dot{\bs Z}$.  Then $\msL$ yields the rate of stretching of the convected curve $\bs z(s) = \bs \chi(\bs Z(s))$, 
	\begin{align}
		 2\msL[\msT \tens \msT \tens \msT] = \frac{d}{ds} |\dot{\bs z}|^2. \label{eq:ratestretching}
	\end{align}
	Moreover, the convected curve $\bs z(\cdot)$ is a geodesic on $\bs \chi(\cl S)$ if and only if { along the geodesic $\bs Z(\cdot)$, we have} 
	\begin{align}
	{ \msL[\msT \tens \msT] = \bs 0.} \label{eq:Lzero}
	\end{align}
\end{prop}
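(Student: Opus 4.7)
The plan is to pass to local coordinates and reduce both parts of the proposition to a single algebraic identity for the ambient acceleration of the convected curve. Writing $\bs Z(s) = \hat{\bs Y}(\theta^1(s), \theta^2(s))$ and setting $T^\alpha = \dot\theta^\alpha$, one has $\msT = T^\alpha \bs Y_{,\alpha}$ and $\bs z(s) = \hat{\bs y}(\theta^1(s), \theta^2(s))$, so that $\dot{\bs z} = T^\alpha \bs y_{,\alpha}$ and $\ddot{\bs z} = T^\alpha T^\beta \bs y_{,\alpha\beta} + \dot T^\alpha \bs y_{,\alpha}$. Taking the Euclidean inner product with $\bs y_{,\gamma}$ and invoking the standard identity $\bs y_{,\gamma}\cdot \bs y_{,\alpha\beta} = \sg_{\gamma\mu}\tensor{\gamma}{^\mu}{_{\alpha\beta}}$ for the Christoffel symbols of the first kind on $\bs\chi(\cl S)$, together with the geodesic equation $\dot T^\alpha = -\tensor{\Gamma}{^\alpha}{_{\mu\nu}}T^\mu T^\nu$ satisfied by $\bs Z$ on $\cl S$, one obtains the key identity
\begin{equation}
\ddot{\bs z}\cdot \bs y_{,\gamma} = \sg_{\gamma\mu}\bigl(\tensor{\gamma}{^\mu}{_{\alpha\beta}} - \tensor{\Gamma}{^\mu}{_{\alpha\beta}}\bigr)T^\alpha T^\beta = \sL_{\gamma\alpha\beta}T^\alpha T^\beta,
\end{equation}
where the second equality is the component formula for $\msL$ recorded in the excerpt.

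Both parts of the proposition then follow at once. For \eqref{eq:ratestretching}, the chain rule gives $\tfrac{d}{ds}|\dot{\bs z}|^2 = 2\dot{\bs z}\cdot \ddot{\bs z} = 2T^\gamma(\ddot{\bs z}\cdot \bs y_{,\gamma}) = 2\sL_{\gamma\alpha\beta}T^\gamma T^\alpha T^\beta$, which is exactly $2\msL[\msT\tens\msT\tens\msT]$ by the defining action of the trilinear operation $[\,\cdot\,]$ introduced in the kinematics section. For \eqref{eq:Lzero}, I would use that a curve in ambient Euclidean space is an affinely parameterized geodesic on a surface precisely when its ambient acceleration is normal to the surface; this is the condition $\ddot{\bs z}\cdot \bs y_{,\gamma} = 0$ for $\gamma = 1, 2$. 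Since $\msL[\msU\tens \msT\tens \msT] = U^\gamma(\ddot{\bs z}\cdot \bs y_{,\gamma})$ and $\{\bs Y_{,\gamma}\}$ is a basis of $T_{\bs Z(s_0)}\cl S$, this normality condition is equivalent to the vanishing of $\msL[\msU\tens \msT\tens \msT]$ for every $\msU \in T_{\bs Z(s_0)}\cl S$.

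One mild subtlety worth flagging is that the parameter $s$ is arclength on $\cl S$ but not in general on $\bs\chi(\cl S)$. This is harmless, because specializing \eqref{eq:Lzero} to $\msU = \msT$ recovers the conclusion of part (a) and forces $|\dot{\bs z}|$ to be constant, automatically promoting $\bs z(s)$ to an affinely parameterized geodesic. The argument is otherwise elementary, and the only real obstacle is the index bookkeeping: one must carefully choose dummy indices so that the contraction of the Christoffel difference $\tensor{\gamma}{^\mu}{_{\alpha\beta}} - \tensor{\Gamma}{^\mu}{_{\alpha\beta}}$ against $T^\alpha T^\beta$ aligns with the stated component form of $\msL$, exploiting the symmetry of this tensor in its lower indices.
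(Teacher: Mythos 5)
Your proof is correct, and it takes a genuinely different (and arguably cleaner) route than the paper's. You derive the single coordinate identity
\begin{equation}
\ddot{\bs z}\cdot\bs y_{,\gamma} = \sg_{\gamma\mu}\bigl(\tensor{\gamma}{^\mu}{_{\alpha\beta}} - \tensor{\Gamma}{^\mu}{_{\alpha\beta}}\bigr)T^\alpha T^\beta = \sL_{\gamma\alpha\beta}T^\alpha T^\beta
\end{equation}
directly in general coordinates, substituting the intrinsic geodesic equation $\dot T^\alpha = -\tensor{\Gamma}{^\alpha}{_{\mu\nu}}T^\mu T^\nu$ for $\bs Z$ on $\cl S$ and the Gauss formula $\bs y_{,\gamma}\cdot\bs y_{,\alpha\beta} = \sg_{\gamma\mu}\tensor{\gamma}{^\mu}{_{\alpha\beta}}$ on $\bs\chi(\cl S)$, and then read off both conclusions: equation \eqref{eq:ratestretching} via $\tfrac{d}{ds}|\dot{\bs z}|^2 = 2\dot{\bs z}\cdot\ddot{\bs z}$, and equation \eqref{eq:Lzero} via the characterization that an affinely parameterized geodesic has ambient acceleration orthogonal to the surface. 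The paper instead separates the two parts: for \eqref{eq:ratestretching} it differentiates $\msC\msT\cdot\msT$ covariantly along the curve and discards the $\msC\nabla_{\msT}\msT\cdot\msT$ term using $\nabla_{\msT}\msT = \bs 0$; for \eqref{eq:Lzero} it chooses normal coordinates centered at $\bs Z(s_0)$ so that $\tensor{\Gamma}{^\mu}{_{\beta\delta}}|_{s_0}$ vanishes, which makes the intrinsic acceleration on $\bs\chi(\cl S)$ reduce to $\tensor{\gamma}{^\mu}{_{\beta\delta}}\sT^\beta\sT^\delta\bs y_{,\mu}$ and then tests against $\msF\msU$ using the isomorphism $\msF|_{\bs Z(s_0)}$. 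The paper's normal-coordinate trick avoids explicit cancellation of Christoffel terms at the price of invoking the existence of such coordinates; your version carries the bookkeeping explicitly but is uniform across both parts and makes the Christoffel-difference structure of $\msL$ manifest. Your remark about the constancy of $|\dot{\bs z}|$ following from specializing to $\msU = \msT$ is a welcome clarification of the affine-parameterization point, which the paper leaves implicit.
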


\begin{proof}
To prove \eqref{eq:ratestretching}, we simply note that since $\bs Z$ is a geodesic, $\nabla_{\msT} \msT = \bs 0$ so 
\begin{align}
	\frac{d}{ds} |\dot{\bs z}|^2 &= \frac{d}{ds} [\msC \msT \cdot \msT]\\ 
	&= \nabla \msC[\msT \tens \msT \tens \msT] + 2 \msC \nabla_{\msT} \msT \cdot \msT \\
	&= \nabla \msC[\msT \tens \msT \tens \msT] = 2 \msL[\msT \tens \msT \tens \msT].
\end{align}	
	
We now prove \eqref{eq:Lzero}. The convected curve $\bs z$ on $\bs \chi(\cl S)$ has tangent vector field $\mst = \msF \msT = \st^\mu \bs y_{,\mu}$ and acceleration vector field 
\begin{align}
	\msa = \bigl ( \dot \st^\mu + \tensor{\gamma}{^\mu_\beta_\delta} \st^\beta \st^\delta \bigr ) \bs y_{,\mu}. 
\end{align}
The acceleration is zero and $\bs z$ is a geodesic if and only if for each $s_0 \in I$, 
\begin{align}
		\forall \msu \in T_{\bs z(s_0)} \bs \chi(\cl S), \quad	\msu \cdot \msa |_{s = s_0} = 0. \label{eq:zeroacc}
\end{align}
We now show that \eqref{eq:zeroacc} is equivalent to \eqref{eq:Lzero}. We assume without loss of generality that $s_0 = 0$, and we choose normal coordinates $(\theta^1, \theta^2)$ centered at $\bs Z(0)$.  Then for all $\al, \beta, \delta = 1,2$, 
\begin{align}
	\sG_{\al \beta} |_{s = 0} = \delta_{\al \beta}, \quad \tensor{\Gamma}{^\mu_\beta_\delta}|_{s = 0} = 0, 
\end{align}
and there exist $\sT^1, \sT^2 \in \bbR$ with $\delta_{\al \beta} \sT^\al \sT^\beta = 1$ such that $\bs Z(s) = \hat{\bs Y}(s \sT^1, s \sT^2)$. In particular, we conclude that 
\begin{align}
 \msa = \tensor{\gamma}{^\mu_\beta_\delta} \sT^\beta \sT^\delta \bs y_{,\mu}.
\end{align}
Since $\msF |_{\bs Z(0)} : T_{\bs Z(0)} \cl S \rar T_{\bs z(0)} \bs \chi(\cl S)$ is an isomorphism, \eqref{eq:zeroacc} is equivalent to 
\begin{gather}
			\forall \msU = \sU^\mu \bs Y_{,\mu} |_{\bs Z(0)} \in T_{\bs Z(0)} \cl S, \quad	\msF \msU \cdot \msa \big |_{s = 0} = 0 \\
			\iff \forall \msU = \sU^\mu \bs Y_{,\mu} |_{\bs Z(0)} \in T_{\bs Z(0)} \cl S, \quad
			\sg_{\mu \al} \sU^\al \tensor{\gamma}{^\mu_\beta_\delta} \sT^\beta \sT^\delta \big |_{s = 0} = 0 \\
			\iff \forall \msU = \sU^\mu \bs Y_{,\mu} |_{\bs Z(0)} \in T_{\bs Z(0)} \cl S, \quad \msL[\msU \tens \msT \tens \msT] \big |_{s = 0} = 0. 
\end{gather}
\end{proof}

As an illustrative example, consider $$\cl S = \{ (X^1, X^2, 0) \mid X^1 \in [a,b], X^2 \in [0, \pi] \} \subseteq \cl B = [a,b] \times [0,\pi] \times [0,\infty),$$  
and the deformation 
\begin{align}
	\bs \chi(X^1, X^2, X^3) = (e^{X^1}\cos X^2, e^{X^1} \sin X^2, X^3).   
\end{align}
Then 
\begin{gather}
	\msE = \frac{1}{2}(e^{2X^1}-1)\bigl ( \bs e_1 \tens \bs e_1 + \bs e_2 \tens \bs e_2 \bigr ), \quad \msK = \bs 0, \\
	\msL = e^{2X^1} \bigl ( \bs e_1 \tens \bs e_1 \tens \bs e_1 + \bs e_2 \tens \bs e_1 \tens \bs e_2 + \bs e_2 \tens \bs e_2 \tens \bs e_1 - \bs e_1 \tens \bs e_2 \tens \bs e_2 \bigr ). 
\end{gather}
The image of the coordinate curve $X^2 = d$ is a straight, radially outward traveling curve in the $x^1x^2$-plane parameterized by $X^1 \in [a,b]$ with 
\begin{align}
	\msL[\bs e_1 \tens \bs e_1 \tens \bs e_1] =  e^{2 X^1}, \quad \msL[\bs e_2 \tens \bs e_1 \tens \bs e_1] = 0. 
\end{align}
In particular, the stretch is not constant along the convected curve. The image of the coordinate curve $X^1 = c$ is the upper-half of the circle in the $x^1x^2$-plane centered at $(0,0)$ of radius $e^c$. The convected curve is parameterized by $X^2 \in [0,\pi]$, has constant stretch but nonzero curvature relative to the convected surface (i.e., the $x^1x^2$-plane), and it satisfies  
\begin{align}
	\msL[\bs e_2 \tens \bs e_2 \tens \bs e_2] = 0, \quad \msL[\bs e_1 \tens \bs e_2 \tens \bs e_2] = - e^{2 c}. 
\end{align}

\subsection{Strain energy and material symmetry} 
In our mathematical model, a Green-elastic body $\cl B$ with strain-gradient elastic surface $\cl S \subseteq \p \cl B$ is prescribed a strain energy of the form   
\begin{align}
	\Phi[\bs \chi] = \int_{\cl B} W(\bs C) \, dV + \int_{\cl S} U(\msE, \msK, \nabla \msE) \, dA, \label{eq:strainenergy}
\end{align}
where we omit listing the possible dependence of $W$ on $\bs X \in \cl B$ and of $U$ on $\bs Y \in \cl S$. We note that the strain energy is frame indifferent, i.e., it is invariant with respect to super-imposed rigid motions.\footnote{The fact that $U(\msE, \msK, \nabla \msE)$ is the most general form of a surface energy density depending on the first and second surface derivatives of the deformation that is frame indifferent was shown for planar $\cl S$ by Hilgers and Pipkin in Section 7 of \cite{HilgPip92a} { and for general, possible curved $\cl S$ by Steigmann in Section 2 of \cite{Steigmann18Lattice}.}} 

Although our main motivation for considering a strain-gradient elastic surface is when it forms part of the boundary of a substrate $\cl B$, we will treat material symmetry of $\cl S$ independently of that of $\cl B$. The notion of material symmetry for the energy density $W$ of the three-dimensional solid $\cl B$ is well-known, see \cite{Noll58, TruesdellNollNLFT}, so, we will limit our discussion to that of $\cl S$. 

Consider a material point with reference position $\bs Y_0 \in \cl S$. Our discussion of material symmetry for the surface energy per unit reference area $U$ at $\bs Y_0$ follows the framework introduced by Murdoch and Cohen \cite{MurdochCohen80, MurdCohenAdd81}. Their theory was later advocated for and reformulated by Steigmann and Ogden \cite{SteigOgden99} { and Steigmann \cite{Steigmann18Lattice}} using local coordinate parameterizations. We will follow their style of exposition, and unless specified otherwise, all quantities in what follows are evaluated at $\bs Y_0$. 

Let $\bs \lambda : \bbE^3 \rar \bbE^3$ be a rigid motion with deformation gradient $\bs R \in SO(3)$ satisfying 
 $\bs \lambda(\bs Y_0) = \bs Y_0.$ { For simplicity, we will limit our discussion to those $\bs R$ that fix the normal vector $\bs N$, $\bs R \bs N = \bs N$; see \cite{MurdochCohen80, MurdCohenAdd81, SteigOgden99, Steigmann18Lattice} for the more general theory.}
 We define a second reference surface $\cl S^* = \bigl \{
\bs Y^* = \bs \lambda^{-1}(\bs Y) \mid \bs Y \in \cl S
\bigr \}$.
It follows that $T_{\bs Y_0} \cl S^* = T_{\bs Y_0} \cl S$, at $\bs Y_0 \in \cl S^*$ the unit normal $\bs N^*$ to $\cl S^*$ satisfies $\bs N^* = \bs N = \bs R \bs N$, and 
\begin{align}
	\msR := \bs R |_{\bs T_{\bs Y_0} \cl S} : T_{\bs Y_0} \cl S \rar T_{\bs Y_0} \cl S
\end{align}
is a rotation.
A local parameterization on $\cl S^*$ is given by $\bs Y^* = \hat{\bs Y}^*(\theta^1, \theta^2) := \bs \lambda^{-1}(\hat{\bs Y}(\theta^1, \theta^2))$, so then 
\begin{align}
	\bs Y_{,\al} = \bs R \bs Y^*_{,\al}, \quad \bs Y^{,\al} = \bs R \bs Y^{*,\al}, \quad \al = 1, 2,
\end{align}
and at $\bs Y_0$, $\bs Y_{,\al} = \msR \bs Y^*_{,\al},$ $\bs Y^{,\al} = \msR \bs Y^{*,\al}$, for $\al = 1,2$. 

Let $\bs \chi: \bbE^3 \rar \bbE^3$ be a smooth invertible deformation of Euclidean space. Since a superimposed rigid motion does not affect the value of the surface energy, we will assume without loss of generality that 
\begin{align}
\bs \chi(\bs Y_0) = \bs Y_0 \mbox{ and at $\bs Y_0$, } \msF : T_{\bs Y_0} \cl S \rar T_{\bs Y_0} \cl S. \label{eq:defcond}
\end{align}
Following \cite{MurdCohenAdd81} we will impose the stronger requirement that 
\begin{align}
	\bs \chi(\bs Y_0) = \bs Y_0 \mbox{ and at $\bs Y_0$, } \bs F = \tensor{\mathsf{F}}{^\al_\beta} \bs Y_{,\al} \tens \bs Y^{,\beta} + \bs N \tens \bs N. \label{eq:defcondstrong}
\end{align}
In particular, it follows that at $\bs Y_0$, $\msF = \tensor{\mathsf{F}}{^\al_\beta} \bs Y_{,\al} \tens \bs Y^{,\beta}$ and $\bs n = \bs N = \bs F \bs N$. The deformation $\bs \chi^*(\cdot) := \bs \chi(\bs \lambda(\cdot))$ when restricted to $\cl S^*$ has the same image as $\bs \chi |_{\cl S}$, and in particular, the convected tangent vector fields are the same,
\begin{align}
	\bs y^*_{,\al} := \frac{\p}{\p \theta^\al} \bs \chi^*(\bs Y^*) = \frac{\p}{\p \theta^\al} \bs \chi(\bs Y) = \bs y_{,\al}.
\end{align}  
Then the surface deformation gradients of $\bs \chi$ and $\bs \chi^*$ at $\bs Y_0$ are related by 
\begin{align}
	\msF = \bs y_{\al} \tens \bs Y^{,\al} = \bs y^*_{\al} \tens \msR \bs Y^{*,\al} = \msF^* \msR^T \implies 
	\msC = \msR \msC^* \msR^T.  
\end{align}
The components of the first and second fundamental forms associated to $\cl S$ and $\cl S^*$ satisfy 
\begin{gather}
	\sG_{\al \beta} = \bs Y_{,\al} \cdot \bs Y_{,\beta} = \bs R \bs Y^*_{,\al} \cdot \bs R \bs Y^*_{\beta} = 
	\bs Y^*_{,\al} \cdot \bs Y^*_{\beta} = \sG^*_{\al \beta} \implies
	\msG = \msR \msG^* \msR^T, \\
	\sB_{\al \beta} = \bs Y_{,\al \beta} \cdot \bs N = \bs R \bs Y^*_{,\al \beta} \cdot \bs R \bs N^* = 
 \bs Y^*_{,\al \beta} \cdot \bs N^* = \sB^*_{\al \beta} \implies
 \msB = \msR \msB^* \msR^T,
\end{gather}
and thus, $\msE = \msR \msE^* \msR^T$. 
Since $\msb = \sanb_{\al \beta} \bs y^{,\al} \tens \bs y^{\beta}$ is the same for both deformations, we conclude that the relative normal curvature tensors $\msK$ and $\msK^*$ associated to $\bs \chi$ and $\bs \chi^*$ satisfy at $\bs Y_0$, 
\begin{gather}
	\msK = \msR \msK^* \msR^T.
\end{gather}
Finally, since the components of the first fundamental forms associated to $\bs \chi(\cl S)$ and $\bs \chi^*(\cl S^*)$ satisfy 
\begin{gather}
	\sg_{\al \beta} = \bs y_{\al} \cdot \bs y_{\beta} = \bs y^*_{\al} \cdot \bs y^*_{\beta} =: \sg^*_{\al \beta},
\end{gather}
we conclude that $\nabla \msE$ and $\nabla^* \msE^*$ at $\bs Y_0$ satisfy 
\begin{align}
\nabla \msE &= \frac{1}{2} \nabla_{\delta} \sg_{\al \beta} \bs Y^{,\al} \tens \bs Y^{,\beta} \tens \bs Y^{,\de} \\
&= \frac{1}{2} \nabla^*_{\delta} \sg_{\al \beta}^* \msR \bs Y^{*,\al} \tens \msR \bs Y^{*,\beta} \tens \msR \bs Y^{*,\de} \\
	&= \msR [ (\nabla^* \msE^*)^T \msR^T]^T \msR^T. 
\end{align}

We denote the surface energy per unit reference area relative to $\cl S^*$ by $U^*$. For the surface energy per unit mass to be independent of the reference surface used, we must have
\begin{gather}
	U^*(\msE^*, \msK^*, \nabla^* \msE^*) = U(\msE, \msH, \nabla\msE) \\ 
	= U \Bigl (\msR \msE^* \msR^T, \msR \msK^* \msR^T, \msR \bigl [ (\nabla^* \msE^*)^T \msR^T \bigr ]^T \msR^T \Bigr ). \label{eq:starrelation}
\end{gather}  

We now view $\bs \chi$ also as a deformation of $\cl S^*$, but we denote its values by $\bar{\bs \chi}$, 
\begin{align}
	\bar{\bs \chi}(\bs X) = {\bs \chi}(\bs X), \quad \bs X \in \bbE^3,
\end{align}
and the associated kinematic variables relative to $\cl S^*$ are denoted with an over-bar. We will now derive relationships between $\bar \msE$, $\bar \msK$, $\nabla^* \bar \msE$ and $\msE$, $\msK$, $\nabla \msE$. 

We first note that since ${\bs C} = \bar{\bs C}$, we immediately conclude that 
\begin{align}
	\msE &= \frac{1}{2} [ (\bs C - \bs I) \bs Y_{,\al} \cdot \bs Y_{,\beta} ] 
	\bs Y^{,\al} \tens \bs Y^{,\beta} \\
	&= \frac{1}{2} [ (\bar{\bs C} - \bs I) \bs Y_{,\al}^* \cdot \bs Y_{,\beta}^* ] 
	\bs Y^{*,\al} \tens \bs Y^{*,\beta} = \bar \msE. 
	 \label{eq:barC}
\end{align}
Let 
\begin{align}
	\mbox{Grad}\, {\bs C} = \frac{\p C_{ij}}{\p X^a} \bs e^i \tens \bs e^j \tens \bs e^a,
\end{align}
a third order tensor on $\bbR^3$. The identity $\sg_{\al \beta} = \bs C \bs Y_{,\al} \cdot \bs Y_{,\beta}$, the Gauss equations $\bs Y_{,\al \delta} = \tensor{\Gamma}{^\mu}{_{\al \delta}} \bs Y_{,\mu} + \sB_{\al \delta} \bs N$ on $\cl S$, and the symmetry of $\bs C$ imply that  
\begin{align}
\sg_{\al \beta, \delta} &= \mbox{Grad}\, \bs C[\bs Y_{,\al} \tens \bs Y_{,\beta} \tens \bs Y_{, \delta} ]
+ \bs C \bs Y_{,\al\delta} \cdot \bs Y_{,\beta} + \bs C \bs Y_{,\al} \cdot \bs Y_{,\beta \delta} \\
&= \mbox{Grad}\, \bs C[\bs Y_{,\al} \tens \bs Y_{,\beta} \tens \bs Y_{, \delta} ] + 
\tensor{\Gamma}{^\mu}{_{\al \delta}} \sg_{\mu \beta} + \sB_{\al \delta} \bs C \bs N \cdot \bs Y_{,\beta} \\
&\quad + \tensor{\Gamma}{^\mu}{_{\beta \delta}} \sg_{\mu \al} + \sB_{\beta \delta} \bs C \bs N \cdot \bs Y_{,\al}.
\end{align}
By \eqref{eq:defcondstrong}, we have $\bs C \bs N = \bs N$, and thus, at $\bs Y_0$, 
\begin{align}
	\nabla \msE &= \frac{1}{2} \mbox{Grad}\, \bs C 
	[\bs Y_{,\al} \tens \bs Y_{,\beta} \tens \bs Y_{, \delta} ]\bs Y^{,\al} \tens \bs Y^{,\beta} \tens \bs Y^{,\delta}. 
\end{align}
In particular, since $\bar{\bs C} = \bs C$ we conclude that 
\begin{align}
	\nabla^* \bar \msE = \nabla \msE. \label{eq:barnablaC}  
\end{align}
Finally, as shown in Section 6 of \cite{SteigOgden99}, we have 
\begin{align}
	\bar \msK = \msK. \label{eq:barH}
\end{align}

Inspired by Murdoch and Cohen's extension of Noll's theory of material symmetry, we say that $\bs Y_0 \in \cl S$ is \emph{{ properly} symmetry related} to $\bs Y_0 \in \cl S^*$ if the mechanical responses to the arbitrary deformation $\bs \chi$ are identical, i.e.,\footnote{ Our use of ``proper" is not to indicate that any other theory is improper. It is simply to indicate that we are considering rigid motions $\bs R$ that fix the normal vector and induce a rotation $\msR$ of the tangent plane $T_{\bs Y_0} \cl S^* = T_{\bs Y_0} \cl S$.} 
\begin{align}
		U(\msE, \msK, \nabla \msE) = U^*(\bar \msE, \bar \msK, \nabla^* \bar \msE).
\end{align}
By \eqref{eq:starrelation}, \eqref{eq:barC}, \eqref{eq:barnablaC}, \eqref{eq:barH}, this requirement leads to the following definition: the rotation $\msR$ is in the \emph{{proper} symmetry set} of $\bs Y_0$ relative to $\cl S$ if for every smooth, invertible deformation $\bs \chi: \bbE^3 \rar \bbE^3$ satisfying \eqref{eq:defcondstrong}, we have 
\begin{gather}
	U(\msE, \msK, \nabla \msE) 
	= U \Bigl (\msR \msE \msR^T, \msR \msK \msR^T,
	\msR \bigl [ \nabla\msE^T \msR^T \bigr]^T \msR^T \Bigr ). \label{eq:symmetryset}
\end{gather}
As in the case of the standard theory of Noll \cite{Noll58}, one can verify that the {proper} symmetry set of $\bs Y_0$ relative to $\cl S$ is a subgroup of the group of rotations of $T_{\bs Y_0} \cl S$. In the case that the {proper} symmetry set of $\bs Y_0$ relative to $\cl S$ equals the group of rotations of $T_{\bs Y_0} \cl S$, we say that the surface energy density $U$ is \emph{{properly} hemitropic} at $\bs Y_0$. 

\subsection{Field equations}

Following \cite{SteigOgden99}, the field equations for the body $\cl B$ with strain-gradient elastic surface $\cl S \subseteq \p \cl B$ are defined to be the Euler-Lagrange equations for the Lagrangian energy functional 
\begin{align}
	\cl A[\bs \chi] = \Phi[\bs \chi] + V[\bs \chi] \label{eq:action}
\end{align}
where $V[\bs \chi]$ is the potential energy associated to the applied forces.\footnote{See \cite{Steigmann18Lattice} for the case of a stand-alone elastic surface with no substrate.} In this work, we assume that the G\^ateaux derivative of the load potential takes the form 
\begin{align}
	\dot V = - \int_{\cl B} \bs f \cdot \bs u \, dV - \int_{\cl S} \bs t \cdot \bs u \, dA,
\end{align}
where $\bs f$ is a prescribed external body force on $\cl B$ and $\bs t$ is a prescribed boundary traction on $\cl S$. 

Let $\bs \chi(\cdot ; \eps)$ be a one-parameter family of deformations of $\cl B$ such that 
 $\bs \chi(\cdot; \eps) |_{\p \cl B \backslash \cl S} = \bs \chi_0(\cdot)$, and denote  
\begin{align}
	\dot{} := \frac{d}{d\eps} \Big |_{\eps = 0}, \quad \bs u := \dot{\bs \chi}(\cdot; \eps). 
\end{align}
Then $\bs u |_{\p \cl B}$ vanishes to first order on $\p \cl B \backslash \cl S$. 
Using the chain rule and integration by parts we have the classical identity   
\begin{align}
	\int_{\cl B} \dot W \, dV = \int_{\p \cl B} \bs P \bs N \cdot \bs u \, dA - \int_{\cl B} \Div \bs P \cdot \bs u\, dV \label{eq:Wvariation}
\end{align}
where $\bs P = \tensor{P}{_i^a} \bs e^i \tens \bs e_a$ is the Piola stress with $\tensor{P}{_i^a} = \frac{\p W}{\p \tensor{F}{^i_a}}$ and $\Div \bs P = \bigl ( \p_{X^a} \tensor{P}{_i^a} \big ) \bs e^i$.
Using the chain rule we have that  
\begin{gather}
	\int_{\cl S} \dot U \, dA = \int_{\cl S} \Bigl ( \msT^\al \cdot \bs u_{,\al} + \msM^{\al \beta} \cdot \bs u_{,\al \beta} \Bigr ) dA, \label{eq:variation}\\
	\msT^\al := \frac{\p U}{\p y^k_{,\al}} \bs e^k, \quad 
	\msM^{\al \beta} := \frac{\p U}{\p y^{k}_{,\al \beta}} \bs e^k.
\end{gather}
We define surface stress vectors $\msP^\al$ by
\begin{align}
	\quad \msP^\al := \msT^\al - G^{-1/2} (G^{1/2} \msM^{\al \beta})_{,\beta}, 
\end{align}
and use \eqref{eq:Wvariation}, \eqref{eq:variation} and integration by parts to obtain the Euler-Lagrange equations associated to the Lagrangian energy functional \eqref{eq:action}, 
\begin{align}
\begin{split}
	&\Div \bs P + \bs f = \bs 0, \quad \mbox{on } \cl B, \\
	&\bs P \bs N = G^{-1/2} (G^{1/2} \msP^\al )_{,\al} + \bs t, \quad \mbox{on } \cl S, \\
	&\bs \chi(\bs X) = \bs \chi_0(\bs X), \quad \mbox{on } \p \cl B \backslash \cl S. 
\end{split}\label{eq:fieldequations}
\end{align}

\section{Small strain models}

Our principle motivation for modeling an elastic solid with strain-gradient elastic boundary surface is the study of brittle fracture. In this setting (to be discussed more in the following section), the surface $\cl S$ possessing strain-gradient surface elasticity will be the crack front and strains will be linearized, motivating the introduction of a small-strain surface energy density. In this section, we present a model uniform, properly hemitropic, small-strain surface energy density that requires the same material constants (with the same physical interpretations) as found in the narrower Steigmann-Ogden theory. In contradistinction, the surface energy incorporates the surface's resistance to geodesic distortion and satisfies the strong ellipticity condition. Moreover, the surface energy density may be viewed as a geometric generalization of that introduced and advocated for by Hilgers and Pipkin in \cite{HilgPip92a, HilgPip96}.   

\subsection{Hilgers-Pipkin surface energy}

In what follows indices are raised and lowered using the reference metric $\msG$, but we note that $\bs y^{,\al}$ are the dual vector fields to $\bs y_{,\al}$ and are not given by $\bs y_{,\beta} (\sG^{-1})^{\al \beta}$. For the surface $\cl S$, we propose the uniform, properly hemitropic surface energy density 
\begin{align}
	U &= \frac{\lambda_s}{2} (\tensor{\sE}{^\al_\al} )^2 + \mu_s \sE_{\al \beta} \sE^{\al \beta}
	+ \frac{\zeta}{2} \Bigl [ (\tensor{\sK}{^\al_\al} )^2 + (\sg^{-1})^{\mu \nu} \tensor{\sL}{_\mu_\al^\al} \tensor{\sL}{_\nu_\beta^\beta} \Bigr ] \\
	&\quad + \eta \Bigl [ \tensor{\sK}{_\al_\beta}\sK^{\al \beta} + (\sg^{-1})^{\mu \nu} \tensor{\sL}{_\mu_\al_\beta} \tensor{\sL}{_\nu^\alpha^\beta} \Bigr ]. \label{eq:surfaceen}
\end{align} 
Here $\lambda_s$, $\mu_s$, $\zeta$ and $\eta$ are positive numbers that can be interpreted as the surface Lam\'e constants and pure bending moduli.

In the case that $\cl S$ is contained in a plane with flat coordinates $(\theta^1, \theta^2)$, we have 
\begin{gather}
	\sE_{\al \beta} = \frac{1}{2}(\sg_{\al \beta} - \delta_{\al \beta}), \quad \sL_{\mu \al \beta} = \bs y_{,\mu} \cdot \bs y_{,\al \beta}, \\
	\bs y_{,\al \beta} = \sL_{\mu \al \beta} \bs y^{,\mu} + \sK_{\al \beta} \bs n, \quad
	\sum_{\al = 1}^2 \bs y_{,\al \al} = \tensor{\sL}{_\mu_\al^\al} \bs y^{,\mu} + \tensor{\sK}{^\al_\al} \bs n, \\
	\Bigl |\sum_{\al = 1}^2 \bs y_{,\al\al} \Bigr |^2 = (\tensor{\sK}{^\al_\al} )^2 + (\sg^{-1})^{\mu \nu} \tensor{\sL}{_\mu_\al^\al} \tensor{\sL}{_\nu_\beta^\beta}, \\
	\sum_{\al, \beta = 1}^2 |\bs y_{,\al \beta}|^2 = \tensor{\sK}{_\al_\beta}\sK^{\al \beta} + (\sg^{-1})^{\mu \nu} \tensor{\sL}{_\mu_\al_\beta} \tensor{\sL}{_\nu^\alpha^\beta},
\end{gather}    
and \eqref{eq:surfaceen} becomes 
\begin{align}
	U = \frac{\lambda_s}{2} (\tensor{\sE}{^\al_\al} )^2 + \mu_s \sE_{\al \beta} \sE^{\al \beta}
	+ \frac{\zeta}{2} \Bigl |\sum_{\al = 1}^2 \bs y_{,\al\al} \Bigr |^2
	+ \eta \sum_{\al, \beta = 1}^2 |\bs y_{,\al \beta}|^2. \label{eq:HPen} 
\end{align}

Up to a choice of constants, the surface energy density \eqref{eq:HPen} is precisely that introduced by Hilgers and Pipkin in \cite{HilgPip92a}, and therefore, we refer to \eqref{eq:surfaceen} as a \emph{small-strain Hilgers-Pipkin surface energy}.  In \cite{HilgPip92a, HilgPip96}, Hilgers and Pipkin advocated for the use of \eqref{eq:HPen} over the classical surface energy
\begin{align}
	U &= \frac{\lambda_s}{2} (\tensor{\sE}{^\al_\al} )^2 + \mu_s \sE_{\al \beta} \sE^{\al \beta}
	+ \frac{\zeta}{2} \tensor{\sK}{^\al_\al} + \eta \sK_{\al \beta} \sK^{\al \beta} \\
	&= \frac{\lambda_s}{2} (\tensor{\sE}{^\al_\al} )^2 + \mu_s \sE_{\al \beta} \sE^{\al \beta} + \frac{\zeta}{2} \Bigl |\sum_{\al = 1}^2 (\bs n \cdot \bs y_{,\al\al}) \Bigr |^2
	+ \eta \sum_{\al, \beta = 1}^2 (\bs n \cdot \bs y_{,\al \beta})^2 \label{eq:classen}
\end{align}
on the basis of \eqref{eq:HPen} being analytically simpler than \eqref{eq:classen}.  Indeed, with little effort one sees that for \eqref{eq:HPen}, 
\begin{align}
	\msT^\al &= (\lambda_s \tensor{\sE}{^\gamma_\gamma} \delta^{\alpha \beta} + 2 \mu_s \sE^{\al \beta}) \bs y_{,\beta}, \label{eq:HPent}\\
	\msM^{\al \beta} &= \zeta \Bigl (\sum_{\gamma} \bs y_{,\gamma \gamma} \Bigr ) \delta^{\al \beta} + 2 \eta \bs y_{,\al \beta}. \label{eq:HPenm}
\end{align}
Moreover, it is simple to see that \eqref{eq:HPen} satisfies the \textit{strong ellipticity condition}, 
\begin{gather}
\forall 
(\sa_1, \sa_2) \in \bbR^2 \backslash\{\bs (0,0)\}, \bs b \in \bbR^3 \backslash \{\bs 0\}, \quad	\sa_{\al} \sa_{\beta} \bs b \cdot \Bigl ( \bs C^{\al \be \delta \gamma} \sa_{\de} \sa_{\gamma} \bs b \Bigr ) > 0,  \label{eq:strongellipt} \\
	\bs C^{\al \beta \gamma \delta} := \frac{\p^2 U}{\p y^{i}_{,\al \beta} \p y^{j}_{,\de \gamma}} \bs e^i \tens \bs e^j, 
\end{gather}
while \eqref{eq:classen} does not (see \eqref{eq:strong} and \eqref{eq:LHcondition} below). Physically, this may be viewed as a consequence of the surface energy \eqref{eq:surfaceen} incorporating the surface's resistance to geodesic distortion via also including dependence on the tensor $\msL$.  

In general, for \eqref{eq:surfaceen} we have  
\begin{align}
	\msT^\al = \frac{\p U}{\p \bs y_{,\al}} = 
	\frac{\p U}{\p \sE_{\beta \gamma}} \frac{\p \sE_{\beta \gamma}}{\p \bs y_{,\al}} + \frac{\p U}{\p \sK_{\delta \nu}} \frac{\p \sK_{\delta \nu}}{\p \bs y_{,\al}} + 
	\frac{\p U}{\p \sL_{\beta \gamma \delta}} \frac{\p \sL_{\beta \gamma \delta}}{\p \bs y_{,\al}}.  
\end{align}
Using 
\begin{gather}
	\frac{\p \sE_{\beta \gamma}}{\p \bs y_{,\al}} = \frac{1}{2}\Bigl (\tensor{\delta}{^\al_\beta} \bs y_{,\gamma} + \tensor{\delta}{^\al_\gamma}  \bs y_{,\beta} \Bigr ), \\
	\frac{\p \sK_{\mu \nu}}{\p \bs y_{,\al}} = - \tensor{\gamma}{^\al_\mu_\nu} \bs n, \quad 
		\frac{\p \sK_{\mu \nu}}{\p \bs y_{,\al \beta}} = \frac{1}{2} (\tensor{\delta}{^\al_\mu}\tensor{\delta}{^\beta_\nu} + \tensor{\delta}{^\al_\nu}\tensor{\delta}{^\beta_\mu} ) \bs n, \\
	\frac{\p \sL_{\beta \gamma \mu}}{\p \bs y_{,\al}} = \tensor{\delta}{^\al_\beta} \bs y_{\gamma \mu} - 
	(\tensor{\delta}{^\al_\nu} \bs y_{,\beta} + \tensor{\delta}{^\al_\beta}  \bs y_{,\nu}) \tensor{\Gamma}{^\nu_\gamma_\mu}, \\
	\frac{\p \sL_{\gamma \mu \sigma}}{\p \bs y_{,\al \beta}} = \frac{1}{2} (
	\tensor{\delta}{^\al_\mu} \tensor{\delta}{^\beta_\sigma} + \tensor{\delta}{^\al_\sigma} \tensor{\delta}{^\beta_\mu}) \bs y_{\gamma},
\end{gather}
we readily compute that
\begin{align}
		\msT^\al &= 
		\Bigl [
		\lambda_s \tensor{\sE}{^\mu_\mu} (\sG^{-1})^{\al \gamma} + 2 \mu_s \sE^{\al \gamma} \\
		&\qquad - (\sg^{-1})^{\mu \al} (\sg^{-1})^{\nu \gamma} (
		\zeta \tensor{\sL}{_\mu_\delta^\delta} \tensor{\sL}{_\nu_\sigma^\sigma}
		+ 2\eta \tensor{\sL}{_\mu_\delta_\sigma} \tensor{\sL}{_\nu^\delta^\sigma}
		)
		\Bigr ] \bs y_{,\gamma}, \\
		&\quad  
		-\Bigl [
		\zeta \tensor{\sK}{^\mu_\mu} (\sG^{-1})^{\delta \nu} \tensor{\gamma}{^\al_\delta_\nu} + 2 \eta \sK^{ \delta \nu} \tensor{\gamma}{^\al_\delta_\nu} \Bigr ]\bs n,  \\ 
		&\quad +
		\Bigl [
		\zeta (\sg^{-1})^{\mu \al} \tensor{\sL}{_\mu_\nu^\nu} (\sG^{-1})^{\gamma \delta} + 2 \eta (g^{-1})^{\mu \al} \tensor{\sL}{_\mu^\gamma^\delta} \Bigr ] \bs y_{,\gamma \delta} \\
		&\quad - \zeta \Bigl [ (\sg^{-1})^{\mu \al} \tensor{\sL}{_\mu_\nu^\nu} \tensor{\Gamma}{^\beta_\delta^\delta}
		+ (\sg^{-1})^{\mu \beta} \tensor{\sL}{_\mu_\nu^\nu} \tensor{\Gamma}{^\al_\delta^\delta}
		\Bigr ] \bs y_{,\beta} \\
		&\quad - 2 \eta \Bigl [
		(\sg^{-1})^{\mu \al} \tensor{\sL}{_\mu^\delta^\sigma} \tensor{\Gamma}{^\beta_\delta_\sigma}
		+ (\sg^{-1})^{\mu \beta} \tensor{\sL}{_\mu^\delta^\sigma} \tensor{\Gamma}{^\al_\delta_\sigma}
		\Bigr ] \bs y_{,\beta}. 
\end{align}
and 
\begin{align}
	\msM^{\al \beta} &= \Bigl [
	\zeta \tensor{\sK}{^\mu_\mu} (\sG^{-1})^{\al \beta} + 2 \eta \sK^{\al \beta}  \Bigr ] \bs n \\
	&\quad + \Bigl [
	\zeta (\sg^{-1})^{\mu \gamma} \tensor{\sL}{_\mu_\nu^\nu} (\sG^{-1})^{\al \beta}
	+ 2 \eta (\sg^{-1})^{\mu \gamma} \tensor{\sL}{_\mu^\al^\beta} 
	\Bigr ] \bs y_{,\gamma}.
\end{align}

To see that the strong ellipticity condition is satisfied for \eqref{eq:surfaceen}, we compute 
\begin{align}
	\bs C^{\al \beta \mu \nu} = \Bigl (
	\zeta (\sG^{-1})^{\al \beta} (\sG^{-1})^{\mu \nu} + \eta [(\sG^{-1})^{\al \mu}(\sG^{-1})^{\beta \nu} + (\sG^{-1})^{\al \nu}(\sG^{-1})^{\beta \mu}] \Bigr ) \bs I, 
\end{align}
and thus, for all 
$(\sa_1, \sa_2) \in \bbR^2 \backslash\{\bs (0,0)\}$ and $\bs b \in \bbR^3 \backslash \{\bs 0\}$,
\begin{align}
	\sa_{\al} \sa_{\beta} \bs b \cdot \Bigl ( \bs C^{\al \be \delta \gamma} \sa_{\de} \sa_{\gamma} \bs b \Bigr ) = 
	(\zeta + 2 \eta) [(\sG^{-1})^{\al \beta} \sa_\al \sa_\beta]^2 |\bs b|^2 > 0. \label{eq:strong}
\end{align}
For the classical surface energy \eqref{eq:classen} of Steigmann-Ogden type, we have 
\begin{align}
		\bs C^{\al \beta \mu \nu} = 
	\Bigl ( \zeta (\sG^{-1})^{\al \beta} (\sG^{-1})^{\mu \nu} + \eta [(\sG^{-1})^{\al \mu}(\sG^{-1})^{\beta \nu} + (\sG^{-1})^{\al \nu}(\sG^{-1})^{\beta \mu}] \Bigr ) \bs n \tens \bs n,
\end{align}
and thus, for all 
$(\sa_1, \sa_2) \in \bbR^2$ and $\bs b \in \bbR^3$,
\begin{align}
	\sa_{\al} \sa_{\beta} \bs b \cdot \Bigl ( \bs C^{\al \be \delta \gamma} \sa_{\de} \sa_{\gamma} \bs b \Bigr ) = 
	(\zeta + 2 \eta) [(\sG^{-1})^{\al \beta} \sa_\al \sa_\beta]^2 (\bs n \cdot \bs b)^2. \label{eq:LHcondition}
\end{align}
In particular, \eqref{eq:LHcondition} shows that the surface energy \eqref{eq:classen} satisfies the associated Legendre-Hadamard condition { (see \cite{Graves39, HilgersPipkin92Graves})
\begin{gather}
	\forall 
	(\sa_1, \sa_2) \in \bbR^2 \backslash\{\bs (0,0)\}, \bs b \in \bbR^3 \backslash \{\bs 0\}, \quad	\sa_{\al} \sa_{\beta} \bs b \cdot \Bigl ( \bs C^{\al \be \delta \gamma} \sa_{\de} \sa_{\gamma} \bs b \Bigr ) \geq 0,  \label{eq:LegHad}
\end{gather}
}but not the strong ellipticity condition since the right side of \eqref{eq:LHcondition} is 0 for $\bs b \neq \bs 0$ and orthogonal to $\bs n$.

\subsection{Parameter values}
Viewing $\cl S$ as the midsurface of a flat, thin, uniform, isoptropic strip with thickness $h$ and Lam\'e parameters $\la_l$, $\mu_l$, the works \cite{HilgPip96, Steig13} suggest the values 
\begin{gather}
	\lambda_s = \frac{2 \la_l \mu_l}{\la_l + 2 \mu_l} h, \quad \mu_s = \mu_l h, \quad
	\zeta = \frac{2 \la_l \mu_l}{\la_l + 2 \mu_l} \frac{h^3}{24}, \quad \eta = \frac{h^3}{24} \mu_l. \label{eq:values}  
\end{gather}      
For \eqref{eq:values}, the Hilgers-Pipkin surface energy \eqref{eq:surfaceen} agrees with Koiter's classical shell energy \cite{Koit66} for homogeneous plane strain and pure bending deformations of a plate. 

We note that Koiter's shell energy has been derived as the leading order model in small thickness from classical nonlinear elasticity (see \cite{HilgPip96, Steig13}), while, to the author's knowledge, \eqref{eq:surfaceen} has not. { Moreover, most work deriving surface energies as a small thickness approximation to a three-dimensional strain energy concerns materials having reflection symmetry with respect to a midsurface, and the resulting models do not contain strain-gradients. A notable exception is the work \cite{Steigmann2012KoiterExtension} that derives surface energies from parent strain energies exhibiting aribrary symmetry, and materials without reflection symmetry yield models that do include strain gradients. Deriving the specific Hilgers-Pipkin surface energy \eqref{eq:surfaceen} or related strain-gradient surface energies from three-dimensional nonlinear elasticity or strain-gradient elasticity will be addressed in future work.}  

\subsection{Linearized equations} 
We now compute the linearization of \eqref{eq:field} about the reference configuration. For the substrate, we adopt a classical quadratic, isotropic energy density  
\begin{align}
	W = \frac{\lambda}{2} (\tensor{E}{^i_i})^2 + \mu E_{ij} E^{ij}.
\end{align}
Here indices are raised using the flat metric on $\bbR^3$, and $\lambda$ and $\mu$ are the Lam\'e constants for the three-dimensional solid. For the surface $\cl S$, the surface energy density is given by \eqref{eq:surfaceen}.

Let $\bs u : \cl B \rar \R^3$ be a displacement field such that $\bs u |_{\p \cl B \backslash \cl S} = \bs 0$ and 
\begin{gather}
\sup_{\bs X \in \cl B} \Bigl [ |\bs u(X)| + |\mbox{Grad}\, \bs u(\bs X)| \Bigr ] + 
\sum_{\al, \beta = 1}^2 \sup_{\bs Y \in \cl S} |\bs u_{,\al \beta}(\bs Y)| \leq \delta_0. \label{eq:infintesimaldisplacement}
\end{gather}
Assume that the body force $\bs f$, boundary traction $\bs t$, and Dirichlet condition $\bs \chi_0$ satisfy
\begin{align}
	|\bs f| = O(\delta_0), \quad |\bs t| = O(\delta_0), \quad |\bs \chi_0 - \mbox{Id}| = O(\delta_0). 
\end{align}
If
$
\bs \chi(\bs X) = \bs X +\bs u(\bs X),
$
then $E_{ij} = \veps_{ij} + O(\delta_0^2)$, $\sE_{\al \beta} = \eps_{\al \beta} + O(\delta_0^2)$, and $\sK_{\al \beta} = \sk_{\al \beta} + O(\delta_0^2)$ where  
\begin{align}
	\veps_{ij} = \frac{1}{2} \Bigl (
	\bs e_i \cdot \frac{\p \bs u}{\p X^j}+ \bs e_j \cdot \frac{\p \bs u}{\p X^i}
	\Bigr ) = O(\delta_0),
\end{align}
and on $\cl S$, 
 \begin{align}
 	 \eps_{\al \beta} = \frac{1}{2} \bigl ( \bs Y_{,\al} \cdot \bs u_{,\beta} + \bs Y_{,\al} \cdot \bs u_{,\beta} \bigr ) = O(\delta_0), \quad 
 	\sk_{\al \beta} = \bs N \cdot \bs u_{;\al \beta} = O(\delta_0), \label{eq:EKincrement}
 \end{align}
see (3.12) in \cite{SteigOgden99}. Now we observe that
$	\sL_{\al \beta \delta} = \sanl_{\al \beta \delta} + O(\delta_0^2),$ with 
\begin{align}
	\sanl_{\al \beta \delta} &= \bs Y_{,\al} \cdot \bs u_{,\beta \delta} + \bs Y_{,\beta \delta} \cdot \bs u_{,\al}  - \tensor{\Gamma}{^\mu_\beta_\delta}\bigl ( \bs Y_{,\al} \cdot \bs u_{,\mu} + \bs Y_{,\mu} \cdot \bs u_{,\al} \bigr )\\
		&= \bs Y_{,\al} \cdot \bs u_{;\beta \delta} + \bs Y_{;\beta \delta} \cdot \bs u_{,\al} \\
	&= \bs Y_{,\al} \cdot \bs u_{;\beta \delta} + (\bs N \cdot \bs u_{,\al})\sB_{\beta \delta} = O(\delta_0). \label{eq:Lincrement}
\end{align}
Then $\msT^\al = \mst^\al + O(\delta_0^2)$ and $\msM^{\al \beta} = \msm^{\al \beta} + O(\delta_0^2)$ where 
\begin{align}
	\mst^\al &:= \Bigl [
	\lambda_s \tensor{\eps}{^\mu_\mu} (\sG^{-1})^{\al \gamma} + 2 \mu_s \eps^{\al \gamma} \Bigr ] \bs Y_{,\gamma}  
	-\Bigl [
	\zeta \tensor{\sk}{^\mu_\mu} \tensor{\Gamma}{^\al_\nu^\nu} + 2 \eta \sk^{ \delta \nu} \tensor{\Gamma}{^\al_\delta_\nu} \Bigr ]\bs N,  \\ 
	&\quad + \Bigl [
	\zeta \tensor{\sanl}{^\al_\nu^\nu} \tensor{\sB}{^\delta_\delta} + 2 \eta \sanl^{\al \delta \sigma} \sB_{\delta \sigma}  
	\Bigr ] \bs N 
	 - \Bigl [
	 \zeta \tensor{\sanl}{^\beta_\nu^\nu} \tensor{\Gamma}{^\al_\delta^\delta} + 2 \eta \sanl^{\beta \delta \sigma} \tensor{\Gamma}{^\al_\delta_\sigma}
	 \Bigr ] \bs Y_{,\beta}, \\
	 \msm^{\al \beta} &:= \Bigl [
	 \zeta \tensor{\sk}{^\mu_\mu} (\sG^{-1})^{\al \beta} + 2 \eta \sk^{\al \beta}  \Bigr ] \bs N  + \Bigl [
	 \zeta \tensor{\sanl}{^\gamma_\nu^\nu} (\sG^{-1})^{\al \beta}
	 + 2 \eta  \tensor{\sanl}{^\gamma^\al^\beta} 
	 \Bigr ] \bs Y_{,\gamma}. 
\end{align}

The linearization of \eqref{eq:fieldequations} about the reference configuration is obtained by omitting the $O(\delta_0^2)$ terms from $\bs P$, $\msT^\al$ and $\msM^{\al \beta}$, yielding 
\begin{align}
	\begin{split}
		&\Div \bs \sigma + \bs f = \bs 0, \quad \mbox{on } \cl B, \\
		&\bs \sigma \bs N = G^{-1/2} (G^{1/2} \msp^\al )_{,\al} + \bs t, \quad \mbox{on } \cl S, \\
		&\bs u = \bs 0, \quad \mbox{on } \p \cl B \backslash \cl S,
	\end{split}\label{eq:linfieldequations}
\end{align}
where $\bs \sigma = \lambda (\tr \bs \veps) \bs I + 2\mu \bs \veps$ and $\msp ^\al = \mst^\al - G^{-1/2} (G^{1/2} \msm^{\al \beta})_{,\beta}$. We observe that solutions to the linearized equations \eqref{eq:linfieldequations} are critical points of the energy functional
\begin{align}
	\cl A_L[\bs u] &= \int_{\cl B} \Bigl [ \frac{\lambda}{2} (\tensor{\veps}{^i_i})^2 + \mu \veps_{ij} \veps^{ij} \Bigr ] dV - \int_{\cl B} \bs f \cdot \bs u \, dV
	 + \int_{\cl S} \Bigl [
	\frac{\lambda_s}{2} (\tensor{\eps}{^\al_\al} )^2 + \mu_s \eps_{\al \beta} \eps^{\al \beta} \Bigr ] dS \\
	&\quad 
	+ \int_{\cl S} \Bigl [\frac{\zeta}{2} \Bigl ( (\tensor{\sk}{^\al_\al} )^2 + \tensor{\sanl}{_\mu_\al^\al} \tensor{\sanl}{^\mu_\beta^\beta} \Bigr )  + \eta \Bigl ( \tensor{\sk}{_\al_\beta}\sk^{\al \beta} + \tensor{\sanl}{_\mu_\al_\beta} \tensor{\sanl}{^\mu^\alpha^\beta} \Bigr )
	\Bigr ] dA - \int_{\cl S} \bs t \cdot \bs u \, dA 
\end{align}
over the set of $\bs u$ satisfying $\bs u |_{\p \cl B \backslash \cl S} = \bs 0$. 

In what follows, $\tensor{\bs g}{_{,\alpha}^\beta} = \delta^{\beta \mu} \bs g_{,\alpha \mu}$ and $\bs g^{,\alpha \beta} = \delta^{\alpha \lambda} \delta^{\beta \mu} \bs g_{,\lambda \mu}$. For the case of the classical Hilgers-Pipkin surface energy \eqref{eq:HPen}, we see from \eqref{eq:HPent} and \eqref{eq:HPenm} that 
\begin{align}
	\mst^\al = (\lambda_s \tensor{\eps}{^\gamma_\gamma} \delta^{\al \beta} + 2 \mu_s \eps^{\al \beta}) \bs Y_{,\beta}, \quad 
	\msm^{\al \beta} = \zeta \delta^{\al \beta} \tensor{\bs u}{_{,\gamma}^\gamma} + 2 \eta \bs u^{,\al \beta} 
\end{align}
Writing $\bs u = \msu + \su^3 \bs N = \su^\gamma \bs Y_{\gamma} + \su^3 \bs N$, it follows that  \eqref{eq:linfieldequations} becomes 
\begin{align}
	\begin{split}
		&\Div \bs \sigma + \bs f = \bs 0, \quad \mbox{on } \cl B, \\
		&\bs \sigma \bs N = \mu_s \tensor{\msu}{_{,\al}^\al} + (\lambda_s +  \mu_s) \tensor{\su}{^\gamma_{,\al \gamma}} \bs Y^{,\al} - (\zeta + 2 \eta) \p_{\al} \p_\beta\tensor{\bs u}{^{,\al \beta}} + \bs t, \quad \mbox{on } \cl S, \\
		&\bs u = \bs 0, \quad \mbox{on } \p \cl B \backslash \cl S.
	\end{split}\label{eq:HPlinfieldequations}
\end{align}
{ We remark that if $\cl B$ is bounded with a sufficiently smooth boundary, $\cl S \neq \p \cl B$, $\cl S$ has sufficiently smooth boundary, $\bs f \in L^{6/5}(\cl B)$, and $\bs t \in L^p(\cl S)$ with $p > 1$, then \eqref{eq:HPlinfieldequations} has a unique weak solution in an appropriately defined energy space (see Theorem 2 in \cite{Eremeyevetal21}).} 

\section{Mode-III Fracture Problem}

In this section, we apply the linearized theory \eqref{eq:HPlinfieldequations} to the problem of a brittle infinite plate, with a straight crack $\cl C$ of length $2 \ell$, under far-field anti-plane shear loading $\sigma$. As discussed in Section 1 and in contrast to ascribing either a quadratic Gurtin-Murdoch or Steigmann-Ogden surface energy \eqref{eq:classen} to the crack fronts, the use of \eqref{eq:surfaceen} yields a model that predicts bounded strains and stresses up to the crack tips (see Theorem \ref{t:bounded}). 

\subsection{Formulation and governing equations}

We consider a brittle, infinite plate under anti-plane shear loading, $\lim_{x^2 \rar \pm \infty} \sigma_{13} = 0$ and $\lim_{x^2 \rar \pm \infty}\sigma_{23} = \sigma$, with a straight crack $\cl C = \{(x^1, 0, x^3) \mid x^1 \in [-\ell, \ell] \}$ of length $2 \ell$ (see Figure \ref{f:2}). For anti-plane shear, the displacement field takes the form
\begin{align}
	\bs u(x^1, x^2, x^3) = u(x^1, x^2) \bs e_3,
\end{align} 
Then the only nonzero components of the stress are 
\begin{align}
	\sigma_{13} = \mu u_{,1}, \quad \sigma_{23} = \mu u_{,2}. \label{eq:sigma}
\end{align}
By the symmetry of the problem, $u$ can be taken to be even in $x^1$ and odd in $x^2$, so we will focus only on the strain and stress fields for $x^2 \geq 0$. The governing field equations are  \eqref{eq:HPlinfieldequations} on $\cl B = \{ (x^1,x^2,x^3) \mid x^2 \geq 0\}$ with $\cl S = \{ (x^1,0,x^3) \mid x^1 \in [-\ell,\ell] \}$, $\bs t = \bs 0$ and $\bs f = \bs 0$. 

\begin{figure}[b]
	\centering
	\includegraphics[scale=.75]{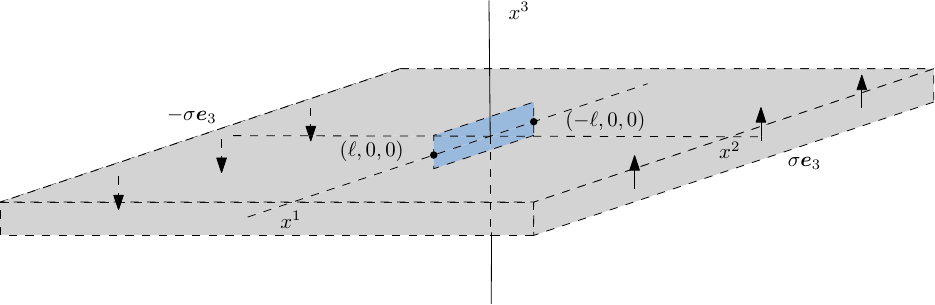}
	\caption{Schematic of the mode-III problem with the crack $\cl C$ appearing in blue.}
	\label{f:2}
\end{figure}

We define dimensionless variables
\begin{align}
x = \frac{x^1}{\ell}, \quad y = \frac{x^2}{\ell}, \quad z = \frac{x^3}{\ell}, \quad
w(x,y,z) = \frac{1}{\ell} \Bigl (
u(x^1, x^2, x^3) - \frac{\sigma}{\mu} x^2
\Bigr ). \label{eq:dimensionless}
\end{align}
Then the field equations take the dimensionless form 
\begin{align}
	\begin{split}
	&\Delta w(x,y) = 0, \quad y > 0, \\
	&-w_y(x,0) = \al w_{xx}(x,0) - \beta w_{xxxx}(x,0) + \gamma, \quad x \in (-1,1),\\
	&w(x,0) = 0, \quad |x| \geq 1, \\
	&w_x(\pm 1, 0) = 0,
	\end{split}\label{eq:field}
\end{align} 
with the decay condition $\lim_{y \rar \infty} |\nabla w(x,y)| = 0$. We note that the boundary conditions $w_x(\pm 1, 0) = 0$ imply that the crack opening is cusp shaped rather then blunted (see also Figure \ref{f:3} and Figure \ref{f:4}). The dimensionless parameters $\al, \beta$ and $\gamma$ are given by  
\begin{align}
	\al = \frac{\mu_s}{\mu \ell} > 0, \quad \beta = \frac{\zeta + 2 \eta}{\mu \ell^3} > 0, \quad \gamma = \frac{\sigma}{\mu}, \label{eq:size}
\end{align}
and in particular, we see from \eqref{eq:size} that the behavior of the displacement $w$ depends on the length of the crack, $\ell$. For macro cracks satisfying $\beta \ll \alpha \ll 1$, we expect $w(x,0)$ to be well-approximated by the singular, rounded opening profile from the classical linear elastic fracture mechanics except in small regions near the crack tips (boundary layers). See Figure \ref{f:4}.  

We remark that in using the Steigmann-Ogden surface energy \eqref{eq:classen} rather than \eqref{eq:surfaceen}, the boundary conditions at $y = 0$ are replaced by 
\begin{align}
	\begin{split}
	&-w_y(x,0) = \al w_{xx}(x,0) + \gamma, \quad x \in (-1,1), \\
	&w(x,0) = 0, \quad |x| \geq 1.
	\end{split}\label{eq:gurtin}
\end{align}
One may view this loss of higher order derivatives in the boundary conditions as a consequence of the fact that the Steigmann-Ogden surface energy does not satisfy the strong-ellipticity condition \eqref{eq:strongellipt}: for anti-plane shear, $\bs b = \bs u(x^1,0,x^3)$ is orthogonal to the surface's normal $\bs n = -\bs e_2$ (see \eqref{eq:LHcondition}).  As discussed in \cite{WaltonNote12, IMRUSch13}, the boundary conditions \eqref{eq:gurtin} \emph{do not} lead to a model predicting bounded strains up to the crack tips $x = \pm 1$, i.e., the displacement field satisfies
\begin{align}
	\sup_{y > 0} |\nabla w(\pm 1, y)| = \infty. 
\end{align}

We see that \eqref{eq:field} is the system of Euler-Lagrange equations for the energy functional 
\begin{align}
	\cl A_L[w] &= \frac{1}{2} \int_0^\infty \int_{-\infty}^\infty |\nabla w(x,y)|^2 dx dy + \int_{-\infty}^\infty \Bigl ( \frac{\al}{2} |w_x(x,0)|^2 + \frac{\beta}{2} |w_{xx}(x,0)|^2 \Bigr ) dx \\
	&\quad- \gamma \int_{-\infty}^\infty w(x,0) dx
\end{align}
defined for $w$ with $\grad w \in L^2(\{y > 0\})$, $w(\cdot,0) \in H^2(\bbR)$ and $w(x,0) = 0$ for all $|x| \geq 1$. Motivated by this observation, we define the Hilbert space $H$ to be the completion of $C^\infty_c((-1,1))$ under the norm
\begin{align}
	\| f \|^2_H := \int_{-\infty}^\infty \bigl ( \alpha |f'(x)|^2 + \beta |f''(x)|^2 \bigr ) dx. \label{eq:defHnorm}
\end{align}
It is straightforward to verify the following facts using the fundamental theorem of calculus and Cauchy-Schwarz inequality:
\begin{itemize}
	\item (Sobolev embedding) If $f \in H$ then $f \in C_c^{1,1/2-}(\bbR)$ and $f(x) = 0$ for all $|x| \geq 1$, and for all $\delta \in [0,1/2)$, there exists a constant $A > 0$ depending on $\al, \beta$ and $\delta$, such that for all $f \in H$,
	\begin{align}
		\| f \|_{C^{1,\gamma}(\bbR)} \leq A \| f \|_H. \label{eq:Sobembedding}
	\end{align}
	\item  $f \in H$ if and only if $f \in H^2(\bbR)$ and $f(x) = 0$ for all $|x| \geq 1$. Moreover, there exist $b, B > 0$ depending on $\al$ and $\beta$ such that for all $f \in H$, 
	\begin{align}
		b \| f \|_H \leq \| f \|_{H^2(\bbR)} \leq B \| f \|_H. \label{eq:equivnorms} 
	\end{align}
\end{itemize}

The problem \eqref{eq:field} can be reduced completely to a problem on the boundary by using the Dirichlet-to-Neumann map $-w_y(x,0) = \cl H w_x(x,0)$ where $\cl H$ is the Hilbert transform
\begin{align}
	\cl H f(x) = \frac{1}{\pi} \, \mbox{p.v.} \, \int_{-\infty}^\infty \frac{f(s)}{x-s} ds, \quad f \in H. 
\end{align}
Then finding $w$ with $\nabla w \in L^2(\{y > 0\})$ and $w(\cdot,0) \in H$ satisfying \eqref{eq:field} is equivalent to determining $w(\cdot,0) =: f \in H$
satisfying\footnote{Once $f$ is found, $w$ is determined on the upper half plane using the standard Poisson kernel for the upper half plane.}
\begin{align}
	\beta f''''(x) - \alpha f''(x) + \cl H f'(x)= \gamma , \quad x \in (-1,1). \label{eq:integrodiff} 
\end{align}
By using the Plancherel theorem, the Fourier representation of the Hilbert transform (see \eqref{eq:Hsymbol}), and \eqref{eq:equivnorms}, we have for all $f \in H$,
\begin{align}
	\| \cl H f' \|_{H^1(\bbR)} \leq \| f' \|_{H^1(\bbR)} \leq \| f \|_{H^2(\bbR)} \leq B \| f \|_H. \label{eq:Hilberth1}
\end{align}

\begin{defn}
A function $f \in H$ is a \emph{weak solution} to the integro-differential equation \eqref{eq:integrodiff} if for all $g \in H$
\begin{align}
	\int_{-\infty}^\infty [\beta f''(x) g''(x) + \alpha f'(x) g'(x) +  \cl H f'(x) g(x) \Bigr ] dx
	= \int_{-\infty}^\infty \gamma g(x) dx. \label{eq:weak}
\end{align}
We remark that since $f, g \in H$, the integrals appearing in \eqref{eq:weak} are in fact over the interval $(-1,1)$.

A function $f \in H$ is a \emph{classical solution} to \eqref{eq:integrodiff} if $f \in C^4((-1,1)) \cap H$ and $f$ satisfies \eqref{eq:integrodiff} pointwise.   
\end{defn}

We note that by \eqref{eq:Hilberth1} and Cauchy-Schwarz, \eqref{eq:weak} is well-defined for each $f,g \in H$.  

\subsection{Solution of the integro-differential equation}

We now establish that there exists a unique classical solution to \eqref{eq:integrodiff}, and the solution's behavior is consistent with the linearization assumption \eqref{eq:infintesimaldisplacement}. 
We denote the following Green function 
\begin{align}
	G(x,\tau) = 
	\begin{cases}
		\frac{1}{24}(x-1)^2(\tau+1)^2(1 + 2x - 2\tau - x\tau) &\quad \tau \in [-1,x], \\
		\frac{1}{24}(\tau-1)^2(x+1)^2(1 + 2\tau - 2x - x\tau) &\quad \tau \in [x,1], 
	\end{cases} 
\end{align}
satisfying $G_{xxxx}(x,\tau) = \delta(x-\tau)$, $G(\pm 1, \tau) = 0$, $G_x(\pm 1, \tau) = 0$. We note that $G(x,\tau) = G(\tau,x)$ for all $\tau,x \in [-1,1]$, $G \in C^2([-1,1] \times [-1,1])$ and $\int_{-1}^1 G(x,\tau) d\tau = \frac{1}{24}(1 - x^2)^2$. In particular, we have for all $f \in H$, 
\begin{align}
	\int_{-1}^1 G_{\tau \tau}(x,\tau) f''(\tau) d\tau = f(x). \label{eq:Greenid}
\end{align}

\begin{figure}[t]
	\centering
	\includegraphics[width=\linewidth]{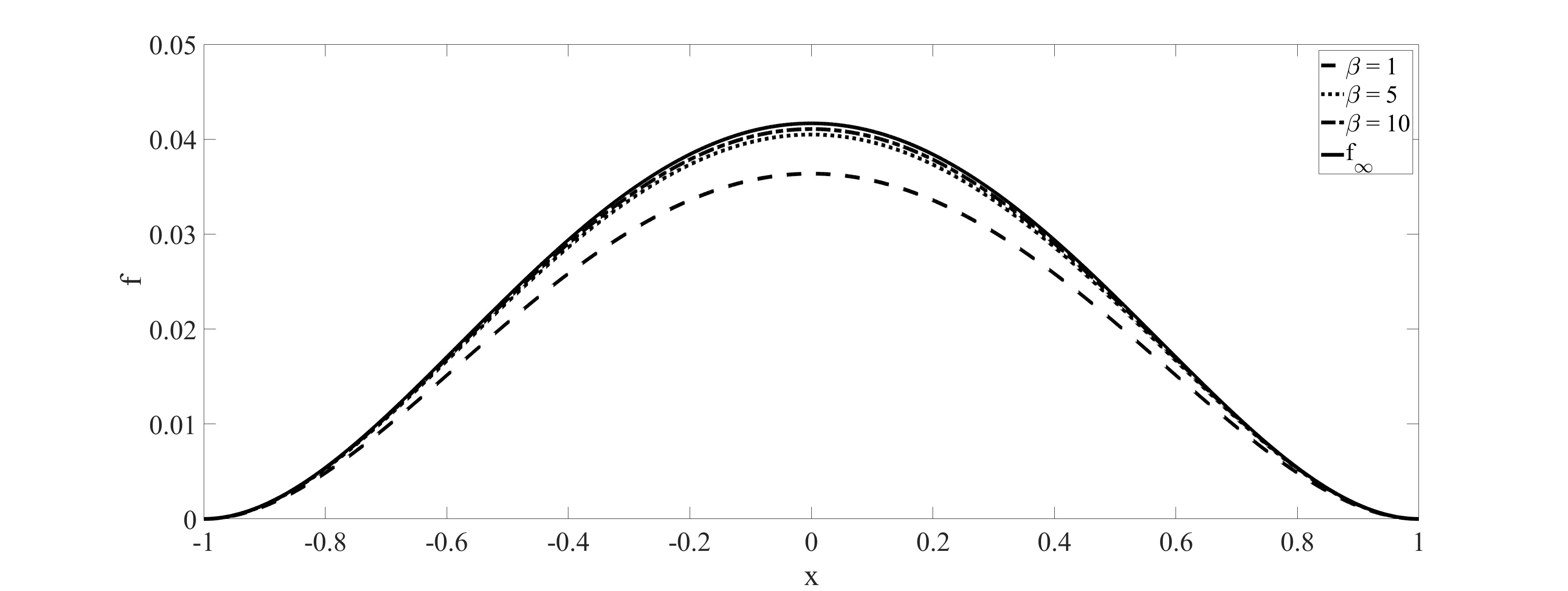}
	\caption{Numerical solutions for the equivalent formulation of \eqref{eq:integrodiff} as a Fredholm problem \eqref{eq:fredequation}. The parameters $(\beta, \alpha, \gamma)$ range over $(1,1,1)$, $(5,1,5)$ and $(10,1,10)$. For $\gamma = \beta$ and $\beta \gg 1 \simeq \alpha$, the opening profile is well approximated by the limiting opening profile $f_{\infty}(x) = \frac{1}{24}(1-x^2)^2$ on $[-1,1]$.} 
	\label{f:3}
\end{figure}

\begin{lem}\label{l:Green}
		A function $f \in H$ is a weak solution to \eqref{eq:integrodiff} if and only if $f$ satisfies
\begin{align}
	\beta f(x) + \int_{-1}^1 G(x,\tau) (-\alpha f''(\tau) + \cl H f'(\tau) ) d\tau = \frac{\gamma}{24}(1 - x^2)^2, \quad x \in [-1,1]. \label{eq:integralequation}
\end{align}
\end{lem}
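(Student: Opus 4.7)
My plan is to use the Green function $G(x,\tau)$ as the bridge between the two formulations, with each direction of the equivalence obtained by a well-chosen test against the respective equation. The central observations are: for every $x_0 \in (-1,1)$ the slice $\tau \mapsto G(x_0,\tau)$ belongs to $H$, since $G$ is $C^2$ on $[-1,1]^2$ and by symmetry both $G(x_0,\pm 1) = G(\pm 1, x_0) = 0$ and $G_\tau(x_0,\pm 1) = G_x(\pm 1, x_0) = 0$; and conversely any $f \in H$ satisfies $f(\pm 1) = f'(\pm 1) = 0$ by the Sobolev embedding \eqref{eq:Sobembedding}. These regularity facts ensure that all integrations by parts below produce no boundary contributions on $[-1,1]$.

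For the forward direction, assume $f$ is a weak solution and fix $x_0 \in (-1,1)$. Testing \eqref{eq:weak} against $g(\tau) = G(x_0,\tau)$, the identity \eqref{eq:Greenid} immediately reduces $\beta \int f''(\tau) G_{\tau\tau}(x_0,\tau)\, d\tau$ to $\beta f(x_0)$. Integration by parts on the $\alpha f' g'$ contribution (boundary terms vanish because $G(x_0,\pm 1)=0$) rewrites it as $-\alpha \int G(x_0,\tau) f''(\tau)\, d\tau$, while the source term $\gamma \int G(x_0,\tau)\, d\tau$ equals $\tfrac{\gamma}{24}(1-x_0^2)^2$ directly from the stated integral of $G$. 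Assembling these pieces yields \eqref{eq:integralequation}.

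For the converse, suppose $f$ satisfies \eqref{eq:integralequation} and test both sides against $g''''(x)$ for an arbitrary $g \in C^\infty_c((-1,1))$, integrating over $[-1,1]$. Four integrations by parts send $\beta \int f\, g''''$ to $\beta \int f'' g''$; Fubini combined with $\int G(x,\tau)\, g''''(x)\, dx = g(\tau)$ (legitimate because $g(\pm 1) = g'(\pm 1) = 0$) converts the convolution term into $\int(-\alpha f''(\tau) + \cl H f'(\tau))\, g(\tau)\, d\tau$, and one further integration by parts turns $-\alpha f'' g$ into $\alpha f' g'$. The right-hand side collapses to $\gamma \int g\, dx$ since $\partial_x^4\bigl[\tfrac{1}{24}(1-x^2)^2\bigr] = 1$ and the compact support of $g$ kills every boundary contribution. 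This is exactly \eqref{eq:weak} for all $g \in C_c^\infty((-1,1))$, which extends to all $g \in H$ by density and the continuity of each bilinear form in the $H$-norm, where \eqref{eq:Hilberth1} handles the Hilbert-transform term.

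The main technical care is in bookkeeping of boundary terms: the asymmetric roles of the two arguments of $G$, combined with the fact that $f$ has only $H^2$-regularity, require that at each integration by parts either the test function or $G$ (or one of their first derivatives) vanishes at $\pm 1$. All such conditions are encoded in the stated properties of $G$ together with $f \in H$ via \eqref{eq:Sobembedding}, so no additional regularity of $f$ is needed at this stage.
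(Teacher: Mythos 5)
Your argument is correct and follows essentially the same strategy as the paper: exploit the biharmonic Green function $G$, test the weak formulation, integrate by parts using the vanishing of $G$, $G_x$ at $x = \pm 1$ and of $f$, $f'$, $g$, $g'$ at $\pm 1$ (the latter from the Sobolev embedding into $C^1$), and close via a density argument. The bookkeeping differs slightly: in the forward direction you test \eqref{eq:weak} directly against the slice $g(\tau) = G(x_0,\tau)$ (after first checking this slice lies in $H$), whereas the paper tests against $g = \int_{-1}^1 G(\cdot,\tau)h(\tau)\,d\tau$ for arbitrary $h \in L^2$ and recovers \eqref{eq:integralequation} a.e.\ by duality; in the converse you hit \eqref{eq:integralequation} with $g''''$ for $g \in C^\infty_c((-1,1))$, whereas the paper differentiates \eqref{eq:integralequation} twice, pairs with $g''$ for all $g\in H$ directly, and uses $\int_{-1}^1 G_{xx}(x,\tau)g''(x)\,dx = g(\tau)$. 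These are two faces of the same computation and both are sound. One small slip: carrying $\int f\, g''''$ to $\int f'' g''$ takes two integrations by parts, not four (four would land you on $\int f'''' g$, requiring regularity you do not yet have); fortunately the identity you invoke is the correct one.
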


\begin{proof}
Let $h \in L^2(\bbR)$ with $h(x) = 0$ for all $|x| > 1$, and set 
\begin{align}
	g(x) = 
\begin{cases}
	\int_{-1}^1 G(x,\tau) h(\tau) d\tau &\mbox{ if } |x| \leq 1, \\
	0 &\mbox{ if } |x| > 1. 
\end{cases}
\end{align}
Since $G \in C^2([-1,1] \times [-,1,1])$, $G(\pm 1, \tau) = 0$, and $G_x(\pm 1, \tau) = 0$, $g$ is twice continuously differentiable on $\bbR \backslash \{\pm 1\}$ and continuously differentiable on $\bbR$ with 
\begin{align}
	g'(x) &= \chi_{\{|x| \leq 1\}}(x) \int_{-1}^1 G_x(x,\tau) h(\tau) d\tau, \\
	g''(x) &= \chi_{\{|x| \leq 1\}}(x) \int_{-1}^1 G_{xx}(x,\tau) h(\tau) d\tau, 
\end{align}  
where $\chi_E$ is the indicator function of a subset $E \subseteq \bbR$.
In particular, we conclude that $g \in H^2(\bbR)$ and thus, $g \in H$. Inserting $g$ into \eqref{eq:weak}, integrating by parts in the second term and using that $g(\pm 1) = 0$ yield
\begin{gather}
	\beta \int_{-1}^1 \int_{-1}^1 G_{xx}(x,\tau)f''(x) h(\tau) d\tau dx \\ + \int_{-1}^1 \int_{-1}^1 G(x,\tau) (-\alpha f''(x) + \cl H f'(x)) h(\tau) d\tau dx \\
	= \gamma \int_{-1}^1 \int_{-1}^1 G(x,\tau) h(\tau) d\tau dx. 
\end{gather}
Interchanging the order of integration, using the symmetry of $G$ and relabeling the integration variables lead to 
\begin{gather}
		\beta \int_{-1}^1 \int_{-1}^1 G_{\tau \tau}(x,\tau)f''(\tau) d\tau \, h(x) dx \\ + \int_{-1}^1 \int_{-1}^1 G(x,\tau) (-\alpha f''(\tau) + \cl H f'(\tau)) d\tau \, h(x) dx \\
	= \int_{-1}^1 \frac{\gamma}{24}(1 - x^2)^2 h(x) dx.
\end{gather} 
Finally, by \eqref{eq:Greenid} we conclude that  
\begin{gather}
\int_{-1}^1 \Bigl [ \beta f(x) + \int_{-1}^1 G(x,\tau) (-\alpha f''(\tau) + \cl H f'(\tau)) d\tau \Bigr ]\, h(x) dx \\
= \int_{-1}^1 \frac{\gamma}{24}(1 - x^2)^2 h(x) dx,
\end{gather}
for all $h(x) \in L^2(\bbR)$ with $h(x) = 0$ for all $|x| > 1$, proving \eqref{eq:integralequation}.

Conversely, if $f \in H$ and \eqref{eq:integralequation} holds, then for all $g \in H$, we have 
\begin{align}
\beta \int_{-\infty}^\infty f''(x) g''(x) dx &= 
 \int_{-1}^1 \int_{-1}^1 G_{xx}(x,\tau) (\alpha f''(\tau) - \cl H f'(\tau)) g''(x) dx \\
&\quad + \int_{-1}^1 \frac{\gamma}{6}(3 x^2 - 1) g''(x) dx.
\end{align}  
We again interchange the order of integration and use integration by parts and $\int_{-1}^1 G_{xx}(x,\tau) g''(x) dx = g(\tau)$ to conclude that
\begin{align}
	\int_{-1}^1 \int_{-1}^1& G_{xx}(x,\tau) (\alpha f''(\tau) - \cl H f'(\tau)) g''(x) dx
	 + \int_{-1}^1 \frac{\gamma}{6}(3x^2 - 1) g''(x) dx \\
	 &= \int_{-1}^1 (\alpha f''(\tau) - \cl H f'(\tau)) g(\tau) d\tau + \int_{-1}^1 \gamma g(x) dx \\
	 &= -\int_{-1}^1 (\alpha f'(x) g'(x) + \cl H f'(x) g(x)) dx + \int_{-1}^1 \gamma g(x) dx
\end{align}
This proves $f \in H$ satisfies \eqref{eq:weak} and concludes the proof of the lemma.
\end{proof}

\begin{figure}[t]
	\centering
	
	\includegraphics[width=1\linewidth]{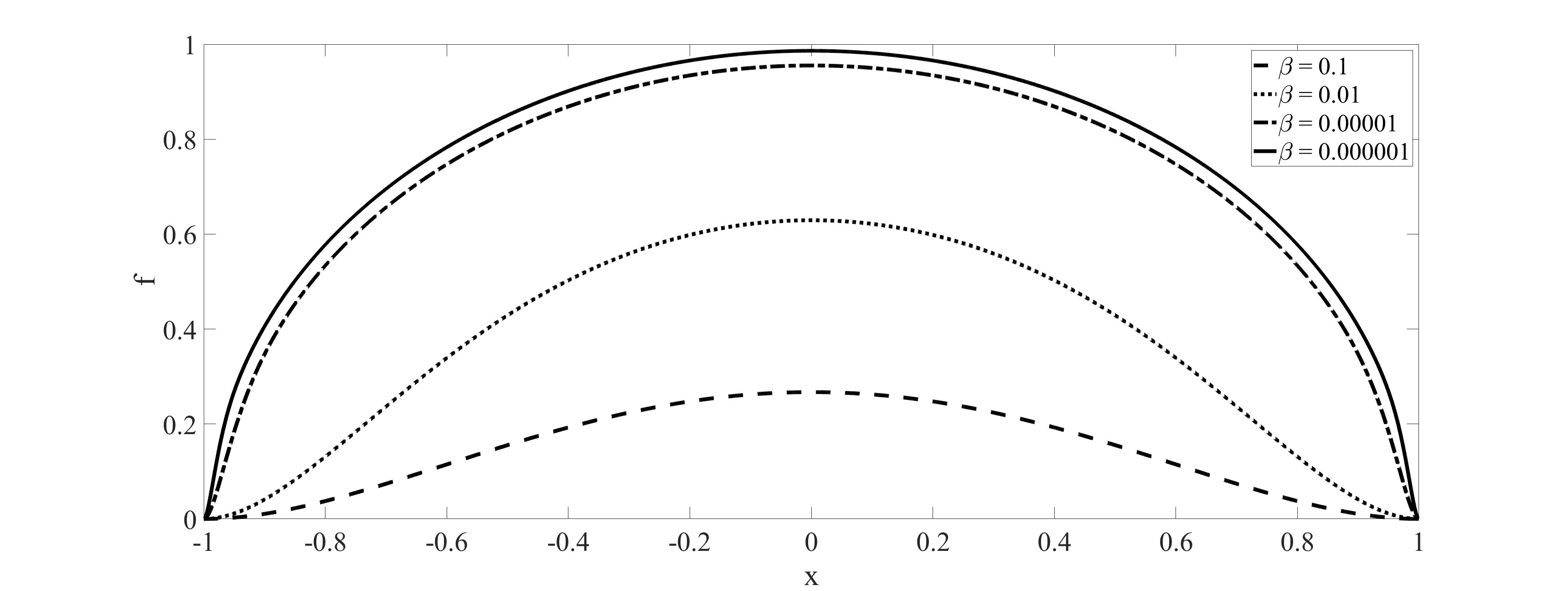}
	\includegraphics[width=1\linewidth]{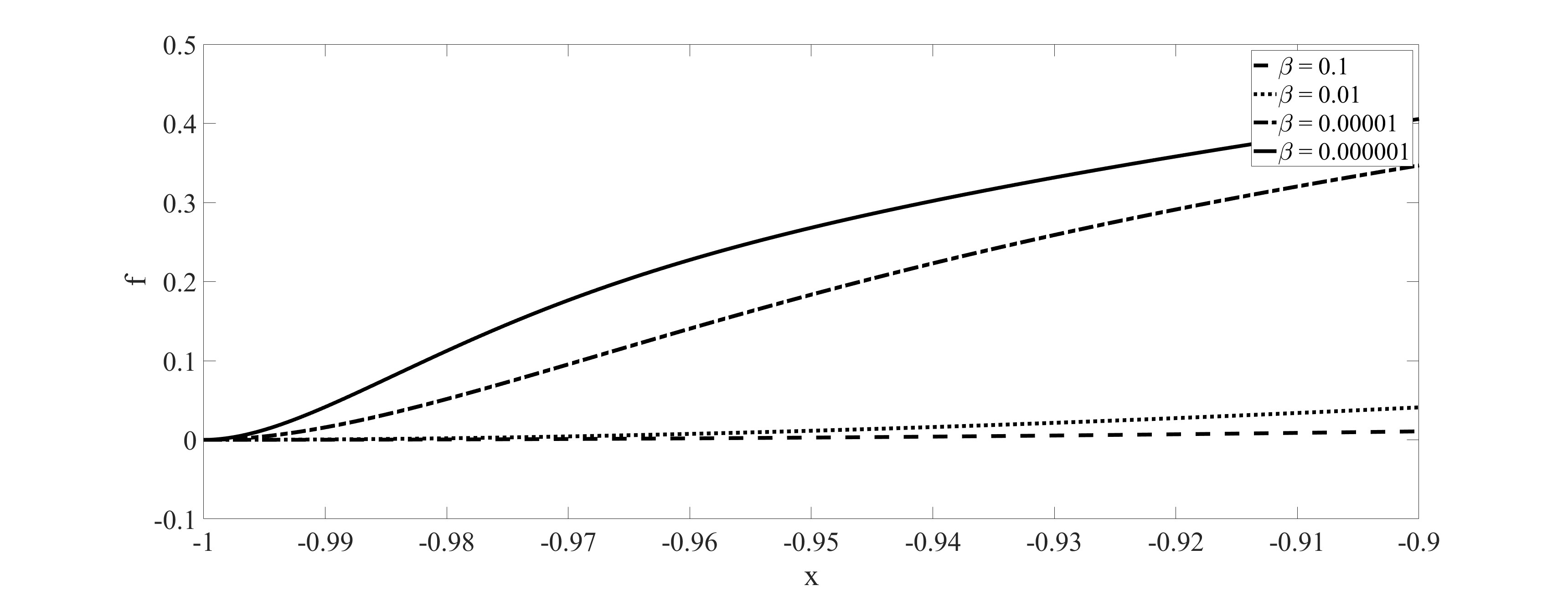}
	
	\caption{Numerical solutions for the macro-crack regime $\beta \ll \al \ll 1$. The parameters $(\beta, \alpha, \gamma)$ range over $(10^{-1},10^{-1},1)$, $(10^{-2},10^{-1},1)$, $(10^{-5},10^{-2},1)$ and $(10^{-6},10^{-3},1)$. For $\beta \ll \alpha \ll 1$, we expect the crack opening $f(x)$ to be well-approximated by the singular, rounded opening profile predicted by classical linear elastic fracture mechanics away from the crack tips where $f'(\pm 1) = 0$ (and the profile is cusped).}
	\label{f:4}
\end{figure}

Via integration by parts and straightforward computations, we conclude from Lemma \ref{l:Green} that the crack opening profile $f$ must satisfy the following Fredholm integral equation of the second kind \eqref{eq:fredequation}. We will show in Theorem \ref{t:bounded} that this equation is uniquely solvable for \emph{arbitrary} $\al, \beta > 0$ and $\gamma$. We remark that since the kernel extends to a continuous function on $[-1,1] \times [-1,1]$ (the singularities are removable), the numerical computation of solutions is relatively straightforward via the N\"ystrom method with the trapezoidal rule to approximate the integral (see Figure \ref{f:3} and Figure \ref{f:4}).  
\begin{cor}\label{c:Fredholm}
	A function $f \in H$ is a weak solution to \eqref{eq:integrodiff} if and only if $f$ satisfies the Fredholm equation
	\begin{align}
		\beta f(x) + \int_{-1}^1 K(x,s) f(s) ds = \frac{\gamma}{24}(1 - x^2)^2, \quad x \in [-1,1], \label{eq:fredequation}
	\end{align}
where $K(x,s) = -\alpha G_{ss}(x,s) + \frac{1}{\pi} \mbox{p.v.}\, \int_{-1}^1 \frac{G_{\tau}(x,\tau)}{s - \tau} d\tau$,
\begin{align}
	G_{ss}(x,s) &= 
\begin{cases}
	-\frac{1}{4}(x-1)^2(2s + xs + 1) &\quad s \in [-1,x], \\
	-\frac{1}{4}(x+1)^2(-2s + xs + 1) &\quad s \in [x,1],
\end{cases}
\end{align}
and 
\begin{align}
\frac{1}{\pi} \mbox{p.v.}\, \int_{-1}^1 \frac{G_{\tau}(x,\tau)}{s - \tau} d\tau &= \frac{1}{4\pi}(sx - 1)(x^2 - 1) + \frac{1}{2\pi}(s - x)^2 \log |x-s| \\
&\quad - \frac{1}{8\pi}(x-1)^2(-x + 2s + sx)(1+s)\log(1+s) \\
&\quad - \frac{1}{8\pi}(x+1)^2(x - 2s + sx)(1-s)\log(1-s). 
\end{align}
\end{cor}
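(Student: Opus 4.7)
The plan is to start from the integral equation of Lemma \ref{l:Green}, perform one integration by parts on each of the two terms under the integral so as to transfer derivatives and the Hilbert transform onto the smooth kernel $G(x,\tau)$, and then read off the resulting Fredholm kernel $K(x,s)$; the closed-form expressions for $G_{ss}$ and for the principal value integral are obtained afterward by explicit computation.

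Since any $f \in H$ belongs to $H^2(\bbR)$ with support in $[-1,1]$, we have $f(\pm 1) = f'(\pm 1) = 0$. Integrating the first term under the integral by parts twice therefore gives
\begin{align}
-\alpha \int_{-1}^1 G(x,\tau) f''(\tau) \, d\tau = -\alpha \int_{-1}^1 G_{\tau\tau}(x,\tau) f(\tau) \, d\tau,
\end{align}
with all boundary contributions vanishing. For the second term, the Hilbert transform commutes with differentiation on $H^2(\bbR)$ (its Fourier symbol is $-\iii\,\sgn(\xi)$), so $\cl H f'(\tau) = \frac{d}{d\tau} \cl H f(\tau)$; integrating by parts once and using $G(x,\pm 1) = 0$ then yields
\begin{align}
\int_{-1}^1 G(x,\tau) \cl H f'(\tau) \, d\tau = -\int_{-1}^1 G_\tau(x,\tau) \cl H f(\tau) \, d\tau.
\end{align}
Expanding $\cl H f(\tau) = \frac{1}{\pi}\,\mbox{p.v.}\int_{-1}^1 \frac{f(s)}{\tau - s}\, ds$ (where the integration interval is restricted to $[-1,1]$ because $f$ is supported there) and interchanging the order of integration, which is justified by the continuity of $G_\tau$ together with the Sobolev embedding $H \hookrightarrow C^{1,1/2-}(\bbR)$ of \eqref{eq:Sobembedding}, converts this to
\begin{align}
\frac{1}{\pi}\int_{-1}^1 \left[\mbox{p.v.} \int_{-1}^1 \frac{G_\tau(x,\tau)}{s - \tau} \, d\tau\right] f(s) \, ds.
\end{align}
Substituting both results back into the integral equation of Lemma \ref{l:Green} produces \eqref{eq:fredequation} with exactly the asserted kernel $K(x,s)$. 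Each manipulation is reversible, so the if-and-only-if equivalence follows from that of Lemma \ref{l:Green}.

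The remaining task is to verify the two explicit formulas for $G_{ss}$ and for the principal value integral. The formula for $G_{ss}(x,s)$ is obtained by differentiating the piecewise definition of $G$ twice in its second argument, a short polynomial computation on each piece that matches the stated expressions. The principal value integral is the only genuinely tedious step: splitting $\int_{-1}^1 = \int_{-1}^x + \int_x^1$, inserting the corresponding polynomial forms of $G_\tau(x,\tau)$, and evaluating the resulting integrals of the form $\int \tau^k/(s-\tau)\, d\tau$ for $k = 0,1,2,3$ via standard antiderivatives produce polynomial contributions together with the logarithmic terms $\log|x-s|$, $(1+s)\log(1+s)$, and $(1-s)\log(1-s)$. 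Collecting like terms and simplifying then produces the closed form in the statement. The main obstacle here is purely bookkeeping---tracking the principal value cancellation near $\tau = s$ and keeping sign conventions straight---rather than any conceptual difficulty.
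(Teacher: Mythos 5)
Your proof is correct and follows the same route the paper indicates (it only says "via integration by parts and straightforward computations, we conclude from Lemma \ref{l:Green}"): integrate by parts in $\tau$ to transfer the two derivatives and the Hilbert transform from $f$ onto the smooth Green function $G$, using $f(\pm 1)=f'(\pm 1)=0$ and $G(x,\pm 1)=0$ to kill boundary terms, then identify the resulting kernel. The one place worth tightening is the interchange of the $s$- and $\tau$-integrations through the principal value: rather than invoking continuity of $G_\tau$ and the Sobolev embedding, the cleanest justification is that $\cl H$ is skew-adjoint on $L^2(\bbR)$, so $-\int G_\tau(x,\tau)\,\cl H f(\tau)\,d\tau = \int \bigl(\cl H G_\tau(x,\cdot)\bigr)(s)\,f(s)\,ds$ directly; this gives the stated kernel with no delicate limiting argument.
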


\begin{thm}\label{t:bounded}
	There exists $C > 0$ depending on $\al$ and $\beta$ such that the following hold. There exists a unique classical solution $f$ to \eqref{eq:integrodiff}, and $f$ satisfies 
	\begin{align}
		\| f \|_{C^4([-1,1])} \leq C |\gamma|. \label{eq:festimate}
	\end{align}
	Moreover, the displacement field $w(x,y) = \int_{-\infty}^\infty P_y(x-s) f(s) ds$, where $P_y(\cdot)$ is the Poisson kernel for the upper half plane, has bounded { stresses and} strains up to the crack tips: 
	\begin{align}
	{ \mu^{-1} \| \bs \sigma \|_{C(\{y \geq 0\})}} + \| w \|_{C^1(\{ y \geq 0 \})} \leq C |\gamma|. \label{eq:westimate}
	\end{align} 
\end{thm}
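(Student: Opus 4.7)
The plan is to combine the Lax--Milgram theorem with an elliptic bootstrap and a classical boundary regularity result for the Poisson extension. I recast \eqref{eq:integrodiff} in weak form by defining on $H \times H$ the bilinear form
\[
a(f,g) := \int_{-\infty}^\infty \bigl[\beta f''(x)g''(x) + \alpha f'(x)g'(x) + \cl H f'(x)\,g(x)\bigr]\,dx
\]
together with $L(g) := \gamma\int_{-\infty}^\infty g(x)\,dx$. Boundedness of $a$ is immediate from Cauchy--Schwarz and \eqref{eq:Hilberth1}, and boundedness of $L$ follows from the Sobolev embedding \eqref{eq:Sobembedding}. Coercivity is the key observation: since $\widehat{\cl H h}(\xi) = -i\,\sgn(\xi)\hat h(\xi)$, Plancherel's theorem gives
\[
\int_{-\infty}^\infty \cl H f'(x)\,f(x)\,dx = \int_{-\infty}^\infty |\xi|\,|\hat f(\xi)|^2\,d\xi \geq 0,
\]
so $a(f,f) \geq \|f\|_H^2$. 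The Lax--Milgram theorem therefore yields a unique weak solution $f \in H$ satisfying $\|f\|_H \leq C_0|\gamma|$ for some $C_0 = C_0(\alpha,\beta)$.

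Next I bootstrap from $H$-regularity to classical $C^4$. Since $f \in H \subseteq H^2(\bbR)$ has support in $[-1,1]$ with $f(\pm 1) = f'(\pm 1) = 0$, and $\cl H$ preserves each $H^s(\bbR)$, we have $\cl H f' \in H^1(\bbR) \hookrightarrow C^{0,\theta}(\bbR)$ for every $\theta \in (0,\tfrac12)$. Viewing \eqref{eq:integrodiff} as the distributional identity $\beta f'''' = \alpha f'' - \cl H f' + \gamma$ on $(-1,1)$ with right-hand side in $L^2((-1,1))$, interior regularity for this constant-coefficient fourth-order operator yields $f \in H^4((-1,1))$, hence $f'' \in C^{1,\theta}([-1,1])$ by Sobolev embedding. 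Substituting back into the identity, the right-hand side is now in $C^{0,\theta}([-1,1])$, so $f'''' \in C^{0,\theta}([-1,1])$ and thus $f \in C^{4,\theta}([-1,1])$. Each step is quantitative, giving $\|f\|_{C^4([-1,1])} \leq C|\gamma|$; uniqueness of the classical solution follows from uniqueness of the weak one.

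Finally, with $f \in C^4([-1,1])$ and $f(\pm 1) = f'(\pm 1) = 0$, the zero-extension $f \in C^{1,\theta}(\bbR)$ is compactly supported. Differentiating $w(x,y) = P_y * f(x)$ under the integral yields $w_x = P_y * f'$, so $\|w_x\|_\infty \leq \|f'\|_\infty$ since $\|P_y\|_{L^1} = 1$. For $w_y$, integration by parts gives $w_y = -Q_y * f'$ with conjugate Poisson kernel $Q_y(x) = x/[\pi(x^2+y^2)]$, and classical boundary regularity for the Poisson integral of a compactly supported H\"older-continuous function yields $|w_y(x,y)| \leq C\|f'\|_{C^{0,\theta}}$ uniformly in $y \geq 0$; combining with the $C^4$ estimate on $f$ produces \eqref{eq:westimate}. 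The chief obstacle is the bootstrap step, namely pushing regularity uniformly up to the crack tips $x = \pm 1$; this succeeds precisely because the fourth-order term $\beta f''''$ is present in \eqref{eq:integrodiff} and because $\cl H$ maps $H^1(\bbR)$ into itself, so that $\cl H f'$ remains globally H\"older continuous across $x = \pm 1$---a qualitative improvement over the Steigmann--Ogden condition \eqref{eq:gurtin}, where no such gain in regularity is available.
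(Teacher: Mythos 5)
Your proof is correct, and the Lax--Milgram portion is essentially identical to the paper's: same bilinear form, same Fourier-side positivity of $\int \cl H f'\cdot f$ (though you dropped the $2\pi$ coming from the convention $\widehat{f'}=2\pi i\xi\hat f$; the sign is all that matters). Where you diverge is in the two regularity steps. For the $C^4$ estimate, the paper goes through the Green function $G(x,\tau)$ of Lemma~\ref{l:Green}: it first proves $f\in H^3([-1,1])$ by differentiating the integral representation, then recovers $f''''\in C([-1,1])$ from \eqref{eq:aeint}. You instead bootstrap directly from the distributional identity $\beta f''''=\alpha f''-\cl H f'+\gamma$ on $(-1,1)$, which in one dimension immediately gives $f\in H^4((-1,1))$ (since $f''',f''''$ are antiderivatives of $L^2$ functions), hence $f\in C^{3,1/2}([-1,1])$, and then feeding this back in yields $f''''\in C^{0,\theta}$ because $\cl H f'\in H^1(\bbR)\hookrightarrow C^{0,1/2^-}(\bbR)$. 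This is cleaner and avoids the explicit Green function, though the paper needs $G$ anyway for the Fredholm reformulation in Corollary~\ref{c:Fredholm}, so the two proofs of regularity are of comparable total cost in context. For \eqref{eq:westimate}, the paper estimates $w$ and $\nabla w$ directly from the Fourier representation \eqref{eq:fourier} together with the weighted bound $\int(1+|\xi|)^4|\hat f|^2\,d\xi\lesssim\|f\|_H^2$ and Cauchy--Schwarz---a three-line, entirely elementary computation. You instead write $w_x=P_y*f'$, $w_y=-Q_y*f'$ and appeal to Young's inequality for the first and to Privalov-type H\"older boundedness of the conjugate Poisson extension for the second; this is standard and correct, but it invokes a genuinely deeper classical theorem than the paper's self-contained Plancherel argument, and the uniform-in-$y$ bound for $Q_y*f'$ deserves at least a citation since $Q_y\notin L^1$.
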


\begin{proof}
In what follows, $C$ will denote a positive constant depending only on $\alpha$ and $\beta$ that may change from line to line, and we denote the Fourier transform and inverse Fourier transform by 
\begin{align}
	\hat f(\xi) = \int_{-\infty}^\infty f(x) e^{-2\pi ix \xi} dx, \quad \check f(x) = \int_{-\infty}^\infty f(\xi) e^{2\pi i x \xi}\, d\xi.
\end{align}
We recall that
\begin{align}
	\widehat{f'}(\xi) = 2 \pi i \xi \hat f(\xi), \quad 
	\widehat{\cl H f}(\xi) = -i \sgn(\xi) \hat f(\xi), \label{eq:Hsymbol}
\end{align}
the latter relation following from the Fourier representation of $w$ on the upper half plane (see \eqref{eq:fourier}). 

We define a bilinear form $B(\cdot,\cdot) : H \times H \rar \bbR$ by 
\begin{align}
	B(f,g) = 	\int_{-\infty}^\infty [\beta f''(x) g''(x) + \alpha f'(x) g'(x) +  \cl H f'(x) g(x) \Bigr ] dx, \quad f,g \in H. 
\end{align}
By Cauchy-Schwarz, \eqref{eq:Hilberth1}, and \eqref{eq:equivnorms} we conclude that for all $f, g \in H$, 
\begin{align}
	|B(f,g)| \leq (1 + B^2) \| f \|_H \| g \|_H,
\end{align}
so that $B(\cdot,\cdot)$ is a bounded bilinear form on $H$. Moreover, by the Plancherel theorem and \eqref{eq:Hsymbol}, we have for all $f \in H$,
\begin{align}
	\int_{-\infty}^\infty \cl H f'(x) f(x) dx = \int_{-\infty}^\infty 2 \pi |\xi| |\hat f(\xi)|^2 d\xi \geq 0. 
\end{align}
Thus, the bilinear form is coercive. Since for all $g \in H$, 
\begin{align}
	\Bigl | \int_{-\infty}^\infty \gamma g(x) dx \Bigr | \leq |\gamma| \| g \|_{C([-1,1])} \leq |\gamma| A \| g \|_H, 
\end{align}
the classical Lax-Milgram theorem implies that there exists a unique weak solution $f \in H$ to \eqref{eq:integrodiff}, and moreover, 
\begin{align}
	\| f \|_H \leq A |\gamma|. \label{eq:lmestimate}
\end{align}

To prove \eqref{eq:westimate}, we express $w$ via the Fourier transform, 
\begin{align}
	w(x,y) = \int_{-\infty}^\infty e^{-2\pi y |\xi|} e^{2\pi i x\xi} \hat f(\xi) d\xi. \label{eq:fourier}
\end{align}
Since $f \in H \subset H^2(\bbR)$, we have 
\begin{align}
	\int (1 + |\xi|)^4 |\hat f(\xi)|^2 d \xi \leq C_0 \| f \|_{H^2(\bbR)}^2 \leq C \| f \|_H^2. 
\end{align}
Thus, by Cauchy-Schwarz 
\begin{align}
	|w(x,y)&| + |\nabla w(x,y)| \\ &\leq \int_{-\infty}^\infty (1 + 2\pi |\xi| \sqrt{2}) |\hat f(\xi)|^2 \, d\xi \\
	&\leq 2\pi \sqrt{2} \Bigl (
	\int_{-\infty}^\infty |\xi|^2(1 + |\xi|)^{-4} d\xi 
	\Bigr )^{1/2} \Bigl (
	\int_{-\infty}^\infty (1 + |\xi|)^{4} |\hat f(\xi)|^2 d\xi 
	\Bigr )^{1/2} \\
	&\leq C \| f \|_H \leq C |\gamma|. \label{eq:wbound}
\end{align}
{ The bound on $\mu^{-1}\bs \sigma$ then follows immediately from \eqref{eq:wbound}, \eqref{eq:dimensionless} and \eqref{eq:sigma}.}

We now show that the weak solution $f$ is a classical solution, and \eqref{eq:festimate} holds. By a density argument in $H$, we have that $\int_{-1}^1 G(\cdot,\tau)(\alpha f''(\tau) - \cl H f'(\tau)) d\tau \in H^3([-1,1])$ with 
\begin{align}
	\frac{d^k}{dx^k} \int_{-1}^1 & G(x,\tau)(\alpha f''(\tau) - \cl H f'(\tau)) d\tau \\ &= 
	 \int_{-1}^1 \p_x^k G(x,\tau)(\alpha f''(\tau) - \cl H f'(\tau)) d\tau, \quad k = 1, 2, \\ 
	\frac{d^3}{dx^3} \int_{-1}^1 &G(x,\tau)(\alpha f''(\tau) - \cl H f'(\tau)) d\tau \\ &= 
	\int_{-1}^x \frac{1}{4}(2-\tau)(\tau+1)^2 (\alpha f''(\tau) - \cl H f'(\tau)) d\tau \\&\quad - 
	\int_{x}^1 \frac{1}{4}(2+\tau)(\tau-1)^2 (\alpha f''(\tau) - \cl H f'(\tau)) d\tau. \label{eq:threederivatives}
\end{align}
By Lemma \ref{l:Green}, we conclude that $f \in H^3([-1,1])$, and by \eqref{eq:integralequation} \eqref{eq:threederivatives}, Cauchy-Schwarz, and \eqref{eq:lmestimate} we have
\begin{align}
\| f''' \|_{L^2([-1,1])} 
&\leq C ( \| f'' \|_{L^2(\bbR)} + \| \cl H f' \|_{L^2(\bbR)} + |\gamma| ) \\ 
&\leq C (\| f \|_H + |\gamma|) \leq C |\gamma|. \label{eq:threederest}
\end{align}  
Moreover, by \eqref{eq:lmestimate}, \eqref{eq:threederest} and the fundamental theorem of calculus, $f \in C^2([-1,1])$ with 
\begin{align}
	\| f \|_{C^2([-1,1])} \leq C |\gamma|. \label{eq:twoderivatives}
\end{align}

By \eqref{eq:weak} and integration by parts, it follows that for all $g \in C^\infty_c((-1,1)) \subset H$, 
\begin{align}
	 \int_{-1}^1 f'''(x) g'(x) dx = \frac{1}{\beta} \int_{-1}^1 [ -\alpha f''(x) + \cl H f'(x) - \gamma ] g(x) dx,
\end{align}
and, thus, $f \in H^4([-1,1])$ and
\begin{align}
		\beta f''''(x) - \alpha f''(x) + \cl H f'(x)= \gamma , \quad \mbox{for a.e. }x \in (-1,1). \label{eq:aeint}
\end{align}
Then by \eqref{eq:aeint}, \eqref{eq:twoderivatives}, and the fact that $\cl H f' \in H^1(\bbR) \hookrightarrow C(\bbR)$, we conclude that $f'''' \in C([-1,1])$ and 
\begin{align}
	\| f'''' \|_{C([-1,1])} &\leq C \Bigl ( \| f \|_{C^2([-1,1])} + \| \cl H f' \|_{C([-1,1])} + |\gamma| \Bigr ) \\
	&\leq C \Bigl ( \| \cl H f' \|_{H^1(\bbR)} + |\gamma| \Bigr ) \\
	&\leq C \Bigl ( \| f \|_H + |\gamma| \Bigr ) \leq C |\gamma|. \label{eq:fourderest}
\end{align}
By \eqref{eq:threederest}, \eqref{eq:twoderivatives} and \eqref{eq:fourderest}, it follows that $f \in C^4([-1,1])$ is a classical solution to \eqref{eq:integrodiff} and \eqref{eq:festimate} holds.
\end{proof}

\bibliographystyle{plain}
\bibliography{researchbibmech}
\bigskip

\centerline{\scshape C. Rodriguez}
\smallskip
{\footnotesize
	\centerline{Department of Mathematics, University of North Carolina}
	
	\centerline{Chapel Hill, NC 27599, USA}
	
	\centerline{\email{crodrig@email.unc.edu}}
}

\end{document}